\newcommand{\Rmnum}[1]{\expandafter\@slowromancap\romannumeral #1@}
\newtheorem{theorem}{Theorem}
\newtheorem{lemma}{Lemma}
\newtheorem{remark}{Remark}
\newtheorem{assumption}{Assumption}
\newtheorem{corollary}{Corollary}
\theoremstyle{definition}
\providecommand{\propositionname}{Proposition}
\patchcmd{\maketitle}{\@fnsymbol}{\@alph}{}{}  
\title{Convergence of Update Aware Device Scheduling for Federated Learning at the Wireless Edge}
\author{\IEEEauthorblockN{Mohammad Mohammadi Amiri, Deniz G\"und\"uz, Sanjeev R. Kulkarni,\\ H. Vincent Poor\thanks{M. Mohammadi Amiri, S. R. Kulkarni, and H. V. Poor are with the Department of Electrical Engineering, Princeton University, Princeton, NJ 08544, USA (e-mail: \{mamiri, kulkarni, poor\}@princeton.edu).} \thanks{D. G\"und\"uz is with the Department of Electrical and Electronic Engineering, Imperial College London, London SW7 2AZ, U.K. (e-mail: d.gunduz@imperial.ac.uk).}
}
}
\begin{document}

\maketitle

\begin{abstract}
We study federated learning (FL) at the wireless edge, where power-limited devices with local datasets collaboratively train a joint model with the help of a remote parameter server (PS). We assume that the devices are connected to the PS through a bandwidth-limited shared wireless channel. At each iteration of FL, a subset of the devices are scheduled to transmit their local model updates to the PS over orthogonal channel resources, while each participating device must compress its model update to accommodate to its link capacity. 
We design novel scheduling and resource allocation policies that decide on the subset of the devices to transmit at each round, and how the resources should be allocated among the participating devices, not only based on their channel conditions, but also on the significance of their local model updates. 
We then establish convergence of a wireless FL algorithm with device scheduling, where devices have limited capacity to convey their messages. 
The results of numerical experiments show that the proposed scheduling policy, based on both the channel conditions and the significance of the local model updates, provides a better long-term performance than scheduling policies based only on either of the two metrics individually. Furthermore, we observe that when the data is independent and identically distributed (i.i.d.) across devices, selecting a single device at each round provides the best performance, while when the data distribution is non-i.i.d., scheduling multiple devices at each round improves the performance. This observation is verified by the convergence result, which shows that the number of scheduled devices should increase for a less diverse and more biased data distribution.\makeatletter{\renewcommand*{\@makefnmark}{}\footnotetext{This work was supported in part by the U.S. National Science Foundation under Grant CCF-0939370, and by the European Research Council (ERC) Starting Grant BEACON (grant agreement no. 677854).}\makeatother}    
\end{abstract}


\section{Introduction}\label{SecIntro}

Devices at the wireless network edge generate a huge amount of data that can be exploited to make sense of the state of a system or to predict its future states. Internet-of-things (IoT) devices, unmanned aerial vehicles (UAVs), or extended reality (XR) technologies are prime examples, where data from multiple sensors must be continuously collected and processed. Many machine learning (ML) algorithms have been developed to exploit these massive datasets, but most current ML solutions focus on centralized algorithms, where a cloud server collects all the data to train a powerful model. However, offloading such massive amounts of data from the edge devices to the could server is often not feasible due to latency, bandwidth, or power constraints, or not allowed due to privacy concerns. A more desirable and practically viable approach is \textit{federated learning} (FL), which enables ML at the wireless edge while the data never leaves the edge devices.


FL utilizes the computational capabilities of edge devices to process their local datasets and collaboratively train a learning model with the help of a parameter server (PS) collecting the local model updates from the devices \cite{DCKonecnyFederated}. Due to unreliable links from the edge devices to the PS with limited energy and bandwidth, it is essential to develop distributed learning techniques with limited communication requirements \cite{DCKonecnyFederated,McMahan2017CommunicationEfficientLO,GoogleMcMahanFed,KonecnyRandDistMean,SmithFedMultiTask,KonecnyFLBeyondData,NishioFLClientSelecHetRes}. However, most of the existing works on communication-efficient FL ignore the physical layer characteristics, and consider perfect rate-limited links between the devices and the PS.


There have been few recent studies on FL taking into account the physical layer aspects of the wireless network from the devices to the PS. 
In \cite{AccelDNNFLEdge}, optimization over batch size and communication resources is proposed to speed up FL over wireless networks.
FL over a Gaussian multiple access channel (MAC) with limited bandwidth is studied in \cite{MohammadDenizDSGDCS}, and novel digital and analog communication approaches have been proposed for the transmissions from the devices. While gradient compression followed by digital channel coding is employed for the former, the latter analog approach benefits from the superposition property of the wireless channel by transmitting the gradients in an analog form. Bandwidth efficiency is achieved in \cite{MohammadDenizDSGDCS} by sparsification of the gradients and random linear projection. In \cite{KaibinParallelWork}, FL over broadband wireless fading MAC is studied, where analog transmission is combined with power control, and the devices perform channel inversion with the full knowledge of channel state information (CSI) to align their transmissions at the PS. In \cite{MohamamdDenizFLOverAirSPAWC19,FLTWCMohammadDenizFading}, the bandwidth-efficient analog communication technique in \cite{KaibinParallelWork} is combined with power control over a bandwidth-limited fading MAC, significantly reducing the communication load. Beamforming techniques at a multi-antenna PS for increasing the number of participating devices and overcoming the lack of CSI at the devices are introduced in \cite{YangFedLearOverAirComp} and \cite{MohammadTolgaDenizFLGlobalSIP}, respectively. In \cite{ChenVinceJointLearCommFL}, resource allocation across devices for FL over wireless channels is formulated as an optimization problem aiming to minimize the learning empirical loss function.
Frequency of participation of the devices is introduced as a device scheduling metric in \cite{YangArafaVinceAgeBasedFL}.
Also, a device scheduling policy for FL over wireless channels is studied in \cite{ShiZhouFastConverFL} to minimize the training delay. 
In \cite{DinhFLWireNetConver} the authors study resource allocation for FL over wireless networks aiming to minimize the total energy consumption, and in \cite{HowardVinceSchedulingsFL} performance analysis of FL under different scheduling policies, including random scheduling, round-robin scheduling and scheduling based on the link qualities of devices, is provided.   
A channel aware quantization technique is proposed in \cite{RaviCommEffFLGaussian} for digital transmission of the gradients over a Gaussian MAC with limited link capacities. 
Also, in \cite{DenizOneBitAgg} the authors develop a digital transmission utilizing the over-the-air aggregation feature of the wireless MAC from the devices to the PS.

In this paper, we build upon our previous work \cite{MohammadDenizSanjVinceISIT20}, and consider FL with digital transmission from the edge devices to the PS over a block fading wireless network with limited power and bandwidth resources. We design novel device scheduling and resource allocation techniques in order to select the devices that send their model updates at each iteration, and to allocate the limited wireless resources among the participating devices to perform orthogonal (interference-free) transmission. Specifically, we take into account the channel conditions and the significance of the local model updates at the devices for device scheduling to make sure that the channel resources are allocated to the devices with both important messages and sufficient-quality link capacity to convey their messages. Next we establish a convergence result for FL with device scheduling, where each participating device can only transmit a limited number of bits imposed by its channel capacity. To the best of our knowledge, this is the first convergence result evaluating the performance of FL as a function of the number of scheduled devices at each round, as well as the number of bits each participating device can transmit, which reduces with the number of participating devices.
The authors in \cite{XLiFedAveFLnonIID} develop a convergence result for FL with device scheduling, however the assumption is that devices can communicate with the PS over an interference-and error-free shared link with unlimited capacity.
Numerical results illustrate the advantages of considering both the channel conditions and the local model updates at the devices for device scheduling over scheduling based on either of the two metrics individually. 
Experiments on MNIST dataset illustrate that allocating all the resources to a single device offers the best performance when the data is independent and identically distributed (i.i.d.) across the devices, while scheduling more devices provides the best performance when data distribution is non-i.i.d.. The convergence result corroborates this observation. Modelling the asymmetry in the data distribution across devices by an increase in the variance of the local gradients, we observe that the convergence speed increases withe the number of scheduled devices when the variance of the gradients is large, while scheduling a single device provides the best performance when the variance is limited.


\textit{Notation}: We denote the set of real, natural and complex numbers by $\mathbb{R}$, $\mathbb{N}$ and $\mathbb{C}$, respectively. We let $[i] \triangleq \{ 1, \dots, i \}$. An $i$-length vector of all-zero entries is denoted by $\boldsymbol{0}_i$. We denote a circularly symmetric complex Gaussian distribution with real and imaginary components with variance $\sigma^2/2$ by $\mathcal{C N} \left( 0,\sigma^2 \right)$. Notation $\left| \cdot \right|$ represents the cardinality of a set or the magnitude of a complex value, the $l_2$-norm of a vector $\boldsymbol{x}$ is denoted by $\left\| \boldsymbol{x} \right\|_2$, and $\langle \boldsymbol{x}, \boldsymbol{y} \rangle$ denotes the inner product of vectors $\boldsymbol{x}$ and $\boldsymbol{y}$. For a set $\cal S$ of real numbers, ${\max}_{[K]} \mathcal S$ returns the $K$-element subset of $\mathcal S$ with the highest values.



\section{System Model}\label{SecProbFormul}

We consider FL across $M$ wireless devices, collaboratively training a model parameter vector $\boldsymbol{\theta} \in \mathbb{R}^d$ with the help of a remote parameter server (PS), to which they are connected through a shared wireless medium, to minimize an empirical loss function 
\begin{align}\label{GenEmpLossFunc}
F \left( \boldsymbol{\theta} \right) = \frac{1}{M} \sum\nolimits_{m=1}^{M} F_m \left(\boldsymbol{\theta} \right),   
\end{align}
where $F_m \left(\boldsymbol{\theta} \right)$ denotes the loss function at device $m$, $m \in [M]$.   

\subsection{FL System}\label{SubSecFL}
In FL, each device performs multiple \textit{stochastic gradient descent} (SGD) updates to minimize an empirical loss function with respect to its local dataset based on a globally consistent model parameter vector received from the PS. 
Let $\mathcal{B}_{m}$ denote the local dataset at device $m$, $m \in [M]$, with $\left| \mathcal{B}_{m} \right| = B$. The loss function at device $m$ is given by 
\begin{align}\label{LossFuncDevicem}
F_m \left( \boldsymbol{\theta} \right) = \frac{1}{B} \sum\nolimits_{\boldsymbol{u} \in \mathcal{B}_m} f \left(\boldsymbol{\theta}, \boldsymbol{u} \right), \quad m \in [M],   
\end{align}
where $f \left(\boldsymbol{\theta}, \boldsymbol{u} \right)$ is an empirical loss function defined by the learning task and quantifies the loss of model $\boldsymbol{\theta}$ at sample $\boldsymbol{u}$. During the $t$-th global iteration, $t=0, 1, \dots$, having received global model parameter vector $\boldsymbol{\theta} (t)$ from the PS, device $m$ performs a $\tau$-step local SGD, for some $\tau \in \mathbb{N}$. The $i$-th step of local SGD for the $t$-th global iteration at device $m$, $m \in [M]$, corresponds to the following update:
\begin{align}\label{ithStepSGDDevicem}
\boldsymbol{\theta}_m^{i+1} (t) = \boldsymbol{\theta}_m^i (t) - \eta^i_m (t) \nabla F_m \left( \boldsymbol{\theta}_m^i (t), \xi_m^i (t) \right),  \quad \mbox{$i \in [\tau]$},   
\end{align}
where $\eta^i_m (t)$ denotes the learning rate, and $\nabla F_m \left( \boldsymbol{\theta}_m^i (t), \xi_m^i (t)  \right)$ denotes the gradient computed at the local mini-batch sample $\xi_m^i (t)$, which is chosen uniformly at random from the local dataset, with $\mathbb{E}_{\xi} \left[ \nabla F_m \left( \boldsymbol{\theta}_m^i (t), \xi_m^i (t) \right) \right] = \nabla F_m \left( \boldsymbol{\theta}_m^i (t)  \right)$, where $\mathbb{E}_{\xi}$ denotes expectation with respect to the randomness of the stochastic gradient function. We set $\boldsymbol{\theta}_m^1 (t) = \boldsymbol{\theta} (t)$, $\forall m \in [M]$, i.e., the first local update is carried out on the global model parameter vector $\boldsymbol{\theta} (t)$ received from the PS. We further denote the model parameter vector after the $\tau$-th local update at device $m$ by $\boldsymbol{\theta}_m (t+1)$, i.e., $\boldsymbol{\theta}_m (t+1) = \boldsymbol{\theta}_m^{\tau+1} (t)$, $m \in [M]$.

Each device conveys its local model estimate after $\tau$ local steps to the PS, which updates the global model parameter vector $\boldsymbol{\theta} (t+1)$ by averaging these results
\begin{align}\label{GlobModelParUpdatePS}
\boldsymbol{\theta} (t+1) = \frac{1}{M} \sum\nolimits_{m=1}^{M} \boldsymbol{\theta}_m (t+1).  
\end{align}
This updated vector is then shared with the devices for further computations until convergence. Having defined the local model update at device $m$ as   
\begin{align}\label{DevicemModelDif}
\Delta \boldsymbol{\theta}_m (t) \triangleq \boldsymbol{\theta}_m (t+1) - \boldsymbol{\theta} (t), \quad m \in [M],
\end{align}
the update in \eqref{GlobModelParUpdatePS} corresponds to 
\begin{align}\label{PSGlobalUpdateModelDif}
\boldsymbol{\theta} (t+1) = \boldsymbol{\theta} (t) + \frac{1}{M} \sum\nolimits_{m=1}^{M} \Delta \boldsymbol{\theta}_m (t).
\end{align}
Thus, the PS requires model updates estimated at the devices to update the global model parameter vector.


When FL is implemented over a shared wireless medium, it is not reasonable to expect all the devices to be able to convey their model updates to the PS reliably, due to power and bandwidth constraints. When the available channel resources are shared between the devices, each device would be allocated only a limited bandwidth, and the amount of information that can be conveyed to the PS can be further limited due to fading. Hence, it is reasonable to schedule only a subset of the devices at each iteration so that the scheduled devices can convey their model updates with sufficient quality and reliability \cite{FLTWCMohammadDenizFading,YangArafaVinceAgeBasedFL,ShiZhouFastConverFL,HowardVinceSchedulingsFL}.

In this paper, we consider scheduling a $K$-element subset of the devices, denoted by $\mathcal{M} (t) \subset [M]$, $\left| \mathcal{M} (t) \right| = K$, at each global iteration step $t$, for the most efficient utilization of the power and bandwidth resources.
The PS determines the set of scheduled devices, and informs them for transmission at each round. 
Accordingly, the PS updates the global model parameter vector based on the received local model updates from only the scheduled devices as follows:
\begin{align}\label{PSGlobalUpdateModelDifScheduled}
\boldsymbol{\theta} (t+1) = \boldsymbol{\theta} (t) + \frac{1}{K} \sum\nolimits_{m \in \mathcal{M} (t)} \Delta \boldsymbol{\theta}_m (t).
\end{align}

\subsection{Wireless Medium}\label{SubSecWireless}

We consider a wireless medium with limited bandwidth from the devices to the PS to transmit the local model updates $\Delta \boldsymbol{\theta}_m (t) \in \mathbb{R}^d$, $\forall m \in [M]$. We assume a single-carrier block fading wireless channel with $n$ symbols (time slots) using TDMA for transmission from the devices to the PS\footnote{The single-carrier assumption is for ease of presentation, and the results in this paper can be extended to multi-carrier systems.}. We denote the length-$n$ input to the channel at device $m$ by $\boldsymbol{x}_m (t) \in \mathbb{C}^n$, where $\boldsymbol{x}_m (t) = \boldsymbol{0}_n$, if $m \notin \mathcal{M}(t)$. The channel gain from device $m$ to the PS is represented by
$h_m (t) \in \mathbb{C}$, which is i.i.d. according to $\mathcal{CN} (0, 1)$. The received signal at the PS includes an independent noise vector with each entry i.i.d. according to $\mathcal{CN} (0, \sigma^2)$. We assume that, at each iteration step, CSI for all the uplink channels is known by the PS, while each device knows its own CSI. The channel input of device $m$ at the $t$-th iteration is a function of the scheduling policy, channel gain $h_m(t)$, local dataset $\mathcal{B}_m$, and $\Delta \boldsymbol{\theta}_m (t)$, $m \in [M]$. For a total of $T$ global iterations, we impose the following average transmit power constraint on device $m$:
\begin{align}\label{AvePowerConsGen}
\frac{1}{T} \sum\nolimits_{t=0}^{T-1} \mathbb{E} \left[ ||\boldsymbol{x}^n_{m} (t)||^2_2 \right] \le \bar{P}, \quad \forall m \in [M],
\end{align}
where the expectation is over the randomness of the channel.

At each global iteration, the goal at the PS is to recover $\frac{1}{K} \sum\nolimits_{m \in \mathcal{M} (t)} \Delta \boldsymbol{\theta}_m (t)$ with the highest fidelity, which is used to update the global model parameters as in \eqref{PSGlobalUpdateModelDifScheduled}. The PS instead uses an estimate of $\frac{1}{K} \sum\nolimits_{m \in \mathcal{M} (t)} \Delta \boldsymbol{\theta}_m (t)$ upon receiving the noisy observation $\boldsymbol{y}(t)$ from the bandwidth-limited wireless medium to update the global model parameter vector, which is then shared among the devices through an error-free shared link for further computations.

We focus on digital transmissions from the devices to the PS, where each scheduled device employs data compression followed by channel coding to transmit its local model updates. We design various scheduling policies, and perform bandwidth allocation among the scheduled devices to have interference-free communication from the participating devices to the PS. In this digital approach, we assume capacity-achieving channel codes are employed by the devices, while the model updates are quantized at a resolution afforded by the channel capacity to guarantee their reliable transmission to the PS. 

At each model update stage, device $m$ is allocated $n_m$ distinct time slots, such that $\sum\nolimits_{m = 1}^{M} n_m = n$, where $n_m = 0$, if $m \notin \mathcal{M} (t)$. For large enough $n_m$, we assume that the Shannon capacity at device $m$ can be achieved; that is, the total amount of information that can be conveyed from device $m$ to the PS is given by $R_m (t) = n_m C_m (t)$, where $C_m (t) \triangleq \log_2 \left( 1 + \left| h_m(t) \right|^2 P_m (t) / \sigma^2 \right)$ and $P_m(t) \triangleq \mathbb{E} \left[ ||\boldsymbol{x}^n_{m} (t)||^2_2 \right]$, $m \in [M]$.

\section{Digital SGD (D-SGD) Quantization Scheme}\label{SecPeoposedDigital}


Here we present the data compression scheme employed by the devices for digital transmission over the wireless channel. 
We utilize the technique introduced in \cite{DCSattlerSparseBinary}, and extended in \cite{MohammadDenizDSGDCS} for FL over a bandwidth-limited wireless medium. 

It is worth noting that, at global iteration $t$, device $m$ intends to transmit $\Delta \boldsymbol{\theta}_m (t)$, computed after the $\tau$-step SGD algorithm, $m \in [M]$. For this purpose, it first quantizes $\Delta \boldsymbol{\theta}_m (t)$ by setting all but the largest $q_m(t)$ and the smallest $q_m(t)$ entries to zero (in practice, we typically have $q_m(t) \ll d$). It then computes the average of the positive and negative entries denoted by $q_m^+(t)$ and $q_m^-(t)$, respectively. If $q_m^+(t) \ge \left| q_m^-(t) \right|$, it sets all the negative entries to zero and all the positive entries to $q_m^+(t)$, and vice versa, if $q_m^+(t) < \left| q_m^-(t) \right|$. We denote the resultant quantized vector with $q_m(t)$ nonzero entries at device $m$ by $\Delta \widehat{\boldsymbol{\theta}}_m \left(q_m(t)\right)$, $m \in [M]$. To transmit $\Delta \widehat{\boldsymbol{\theta}}_m \left(q_m(t)\right)$, device $m$ requires 32 bits representing the real value $q_m^+(t)$ or $q_m^-(t)$ plus 1 bit for its sign, and no more than $\log_2 \binom{d}{q_m (t)}$ bits representing the locations of the nonzero entries, $m \in [M]$. Thus, device $m$ needs to transmit a total of
\begin{align}\label{DSGDNumberBits}
r_m \left(q_m (t)\right) = \log_2 \dbinom{d}{q_m (t)} + 33 \mbox{ bits}, \quad m \in [M].
\end{align}
The D-SGD quantization scheme at device $m$ is characterized by $q_m (t)$, and represented by ${\mbox{D-SGD} \left( q_m (t) \right)}$ resulting in $\Delta \widehat{\boldsymbol{\theta}}_m \left(q_m(t)\right)$, $m \in [M]$. The value of $q_m (t)$ is a design parameter that is determined for different scheduling policies, described in the next section, to satisfy the capacity limitation of transmission from device $m$ to the PS, $m \in [M]$.  

\section{Device Scheduling Policies}\label{SecScheduling}
In FL, the goal is typically to schedule as many devices as possible at each iteration \cite{KaibinParallelWork}; however, when the updates are conveyed to the PS over a shared wireless medium with limited bandwidth, having more devices scheduled at each iteration means that each device is allocated fewer resources, and as a result, contributes to the learning task with less accurate information. 
The goal is to identify the set of scheduled devices at each iteration that results in the best performance.

After receiving $\boldsymbol{\theta} (t)$ from the PS, all the devices perform the $\tau$-step SGD algorithm as in \eqref{ithStepSGDDevicem}. However, only $K \le M$ devices in $\mathcal{M} (t)$ are scheduled for transmission at iteration $t$, and the PS updates the global model parameter vector as follows:
\begin{align}\label{PSGlobalUpdateModelDifScheduledHat}
\boldsymbol{\theta} (t+1) = \boldsymbol{\theta} (t) + \frac{1}{K} \sum\nolimits_{m \in \mathcal{M} (t)} \Delta \widehat{\boldsymbol{\theta}}_m (q_m(t)). 
\end{align}

We take into account the channel conditions and the significance of the local model updates at the devices as the scheduling metrics. We study four different scheduling policies, namely the \textit{best channel} (BC), \textit{best $l_2$-norm} (BN2), \textit{best channel-best $l_2$-norm} (BC-BN2), and \textit{best $l_2$-norm-channel} (BN2-C) schemes, which we explain below. 

Due to the natural symmetry of the considered model across the devices, both in terms of the channel statistics and the local model updates, it is reasonable to assume that the probability of scheduling each device will be the same\footnote{We will indeed see below that this assumption holds for all four scheduling policies considered in this paper.}, given by $K/M$. 
Hence, the average transmit power constraint can be rewritten as follows:
\begin{align}\label{AvePowerConsGenUpdated}
\frac{K}{MT} \sum\nolimits_{t=1}^{T} P_m (t) \le \bar{P}, \quad \forall m \in [M].
\end{align}
For simplicity, we assume a fixed transmission power over time for the scheduled devices, $P_m (t) = M \bar{P}/K$, $\forall m \in \mathcal{M} (t)$, $\forall t$.

\subsection{BC Scheduling Policy}\label{SubSecBCSchePol}
The BC policy schedules devices based only on their channel gains. This generalizes the approach studied in \cite{FLTWCMohammadDenizFading}, where only a single device is scheduled based on the channel conditions. With BC, the PS does not require any information about the model updates at the devices, and it schedules $K$ devices with the highest channel gain magnitudes; that is, 
\begin{align}\label{BCK_t}
\mathcal{M} (t) = {\max}_{[K]} \left\{ \left| h_1 (t) \right|, ..., \left| h_M (t) \right|  \right\}.    
\end{align}
Having no knowledge about the model updates at the devices, the PS allocates the bandwidth so that the scheduled devices can each transmit the same number of bits. Given $\mathcal{M} (t) = \{ m_1, ..., m_K \}$, we want
\begin{align}\label{BCBandAllocEquat_1}
n_{m_1} C_{m_1}(t) = n_{m_2}& C_{m_2}(t) = \cdots = n_{m_K} C_{m_K}(t),
\end{align}
which, having $\sum\nolimits_{k=1}^{K} n_{m_k} = n$, results in 
\begin{align}\label{BCBandAllocEquatSubChann}
n_{m_k} = \frac{\prod_{i=1, i \ne k}^{K} C_{m_i}(t)}{\sum_{j=1}^{K} \prod_{i=1, i \ne j}^{K} C_{m_j}(t)}n, \quad k \in [K].     
\end{align}
After the above bandwidth allocation scheme, device $m_k$ performs quantization D-SGD$(q_{m_k} (t))$, with $q_{m_k} (t)$ set as the largest integer satisfying $r_m \left(q_{m_k} (t)\right) \le n_{m_k}C_{m_k}(t)$, $k \in [K]$, and transmits the quantized bits to the PS over the time slots allocated to it.

\subsection{BN2 Scheduling Policy}\label{SubSecBN2SchePol}
With BN2, the scheduling decision depends only on the significance of the model updates at the devices captured by the $l_2$-norm of the model update, $\left\| \Delta \boldsymbol{\theta}_m (t) \right\|_2$. Transmissions take place in two phases, where in the first phase, having computed $\Delta \boldsymbol{\theta}_m (t)$, device $m$ sends $\left\| \Delta \boldsymbol{\theta}_m (t) \right\|_2$ reliably to the PS, $\forall m \in [M]$. The PS then schedules $K$ devices with the largest $\left\| \Delta \boldsymbol{\theta}_m (t) \right\|_2$ values, i.e.,
\begin{align}\label{BN2K_t}
\mathcal{M} (t) = {\max}_{[K]} \left\{ \left\| \Delta \boldsymbol{\theta}_1 (t) \right\|_2, ..., \left\| \Delta \boldsymbol{\theta}_M (t) \right\|_2  \right\}.    
\end{align}
The bandwidth is allocated to the scheduled devices by the PS such that their link capacities are proportional to the significance of their local model updates; that is, for $m_i , m_{j} \in \mathcal{M} (t)$, $\forall i, j \in [K]$, $i \ne j$, we set
\begin{align}\label{BN2BandAllocEquat}
\frac{n_{m_i} C_{m_i} (t)}{n_{m_{j}} C_{m_{j}} (t)} = \frac{\left\| \Delta \boldsymbol{\theta}_{m_i} (t) \right\|_2}{\left\| \Delta \boldsymbol{\theta}_{m_{j}} (t) \right\|_2}.     
\end{align}
Having $\sum\nolimits_{k=1}^{K} n_{m_k} = n$, it follows that, for $k \in [K]$,
\begin{align}\label{BN2BandAllocEquatSubChann}
n_{m_k} = \frac{\prod_{i=1, i \ne k}^{K} C_{m_i}(t) \left\| \Delta \boldsymbol{\theta}_{m_i} (t) \right\|_2}{\sum_{j=1}^{K} \prod_{i=1, i \ne j}^{K} C_{m_j}(t) \left\| \Delta \boldsymbol{\theta}_{m_j} (t) \right\|_2}n.      
\end{align}
In the second phase of transmission, device $m_k$ transmits the result of D-SGD$(q_{m_k} (t))$, where $q_{m_k} (t)$ is set as the largest integer satisfying $r_{m_k} \left(q_{m_k} (t)\right) \le n_{m_k}C_{m_k}(t)$, $k \in [K]$.

\subsection{BC-BN2 Scheduling Policy}\label{SubSecBC_BN2SchePol}
BC-BN2 generalizes BC and BN2 by taking into account both the channel conditions and the significance of the model updates at the devices. The PS first identifies $K_c$ devices with the best channel conditions, for some $K \le K_c \le M$. Then, among these $K_c$ devices $K$ with the most significant model updates are scheduled. Formally, the PS first selects the best $K_c$ devices according to their channel states as follows:
\begin{align}\label{BC_BN2Kc_t}
\mathcal{M}_c (t) \triangleq {\max}_{[K_c]} \left\{ \left| h_1 (t) \right|, ..., \left| h_M (t) \right|  \right\}.    
\end{align}
Only these selected $K_c$ devices share $\left\| \Delta \boldsymbol{\theta}_m (t) \right\|_2$ with the PS, which schedules $K$ devices among them as follows:
\begin{align}\label{BC_BN2K_t}
\mathcal{M} (t) = {\max}_{[K]} \left\{ \left\| \Delta \boldsymbol{\theta}_m (t) \right\|_2, \forall m \in \mathcal{M}_c (t) \right\}.    
\end{align}
Having known $\left\| \Delta \boldsymbol{\theta}_{m_k} (t) \right\|_2, \forall m_k \in \mathcal{M} (t)$, we follow the same bandwidth allocation scheme as BN2.
Thus, device $m_k$, sends $\Delta \widehat{\boldsymbol{\theta}}_{m_k} \left(q_{m_k}(t)\right) = \mbox{D-SGD}\left(q_{m_k}(t)\right)$ to the PS, where $q_{m_k}(t)$ is set as the largest integer that satisfies $r_m \left(q_{m_k} (t)\right) \le n_{m_k}C_{m_k}(t)$, $k \in [K]$, where $n_{m_k}$ is given in \eqref{BN2BandAllocEquatSubChann}.   

We highlight that BC-BN2 for $K_c = K$ and $K_c = M$ corresponds to BC and BN2, respectively. 

\subsection{BN2-C Scheduling Policy}\label{SubSecBN2_CSchePol}
With BN2-C, each device performs the D-SGD quantization scheme assuming the availability of all the bandwidth to itself, and finds the resultant quantized vector, whose $l_2$-norm is sent to the PS, based on which it schedules the devices. To be more precise, device $m$ calculates the D-SGD quantization parameter, denoted by $q_m^* (t)$, satisfying $r_m \left(q^*_{m} (t)\right) \le C_{m}(t)$, $\forall m \in [M]$, based on which it computes $\Delta \widehat{\boldsymbol{\theta}}_m \left(q^*_m(t)\right) = \mbox{D-SGD}\left(q^*_{m}(t)\right)$; it then shares $|| \Delta \widehat{\boldsymbol{\theta}}_m \left(q^*_m(t)\right) ||_2$ with the PS. The PS schedules the devices with the best estimated $l_2$-norm after quantization:
{\small{\begin{align}\label{BN2CK_t}
\mathcal{M} (t) = {\max}_{[K]} \left\{ || \Delta \widehat{\boldsymbol{\theta}}_1 \left(q^*_1(t)\right) ||_2, ..., || \Delta \widehat{\boldsymbol{\theta}}_M \left(q^*_M(t)\right) ||_2 \right\},    
\end{align}}}
\hspace{-.16cm}and, given $m_i, m_j \in \mathcal{M}(t)$, allocates the bandwidth to the scheduled devices such that
\begin{align}\label{BN2_CBandAllocEquat}
\frac{n_{m_i} C_{m_i} (t)}{n_{m_{j}} C_{m_{j}} (t)} = \frac{|| \Delta \widehat{\boldsymbol{\theta}}_{m_i} (q^*_{m_i}(t)) ||_2}{|| \Delta \widehat{\boldsymbol{\theta}}_{m_j} (q^*_{m_j}(t)) ||_2}, \quad \forall i, j \in [K], i \ne j.     
\end{align}
With $\sum\nolimits_{k=1}^{K} n_{m_k} = n$, it follows that, for $k \in [K]$, 
\begin{align}\label{BN2_CBandAllocEquatSubChann}
n_{m_k} = \frac{\prod_{i=1, i \ne k}^{K} C_{m_i}(t) || \Delta \widehat{\boldsymbol{\theta}}_{m_i} (q^*_{m_i}(t)) ||_2}{\sum_{j=1}^{K} \prod_{i=1, i \ne j}^{K} C_{m_j}(t) || \Delta \widehat{\boldsymbol{\theta}}_{m_j} (q^*_{m_j}(t)) ||_2}n. 
\end{align}
Scheduled device $m$ performs the D-SGD$(q_{m}(t))$ quantization scheme, where $q_{m}(t)$ is set as the largest integer satisfying $r_m \left(q_{m} (t)\right) \le n_{m}C_{m}(t)$, with $n_m$ given in \eqref{BN2_CBandAllocEquatSubChann}, $\forall m \in \mathcal{M}(t)$. 

\begin{remark}
We highlight that BN2-C intertwines the channel conditions and the significance of local model updates to schedule the devices. 
Unlike BN2 and BC-BN2, where $\left\| \Delta \boldsymbol{\theta}_m (t) \right\|_2$ is directly used for scheduling, BN2-C utilizes the output of the D-SGD quantization scheme, $|| \Delta \widehat{\boldsymbol{\theta}}_m \left(q^*_m(t)\right) ||_2$, for scheduling, where $q^*_m(t)$ is a function of the channel gain.  
This novel technique comes with a computational cost at the devices due to the extra computation of $\Delta \widehat{\boldsymbol{\theta}}_m \left(q^*_m(t)\right)$. On the other hand, BC requires the smallest computational overhead at the devices. In this work, we do not study the computational complexity at the devices, and the main goal is to utilize the limited communication resources efficiently.  
\end{remark}

\section{Convergence Analysis}\label{SecConvergence}
Here we provide a convergence analysis for the FL over wireless networks with device scheduling, where the participating devices cannot transmit their local model updates in entirety due to the resource limitations and unreliable wireless links. For the convergence proof, we consider a slightly different quantization technique than the D-SGD quantization scheme, and we show that the gap between the expected loss function and the minimum loss function approaches zero for large enough $T$. 
We first present the preliminaries and assumptions, and then the convergence result, whose proof is provided in the Appendix.

\subsection{Preliminaries}\label{SubSecPreConvergence}
We define the optimal solution as 
\begin{align}\label{ThetaStarDef}
\boldsymbol{\theta}^* \triangleq \arg \mathop {\min }\limits_{\boldsymbol{\theta}} F(\boldsymbol{\theta}),  
\end{align}
and the minimum loss as $F^* \triangleq F(\boldsymbol{\theta}^*)$. We also denote the minimum value of $F_m (\cdot)$, the loss function at device $m$, by $F_m^*$, $m \in [M]$. We then define
\begin{align}\label{GammaDef}
\Gamma \triangleq F^* - \frac{1}{M} \sum\limits_{m=1}^{M} F^*_m,    
\end{align}
where $\Gamma \ge 0$, and its magnitude indicates the bias in the data distribution across devices. For i.i.d. data distribution, given enough number of local data samples, $B$, $\Gamma$ approaches zero.

For the simplicity of the convergence analysis, we consider $\eta_m^i (t) = \eta(t)$, $\forall m ,i$. Thus, the $i$-th step local SGD at device $m$ is given by
\begin{align}\label{ConvSGDDevicem}
\boldsymbol{\theta}_m^{i+1} (t) = \boldsymbol{\theta}_m^i (t) - \eta (t) \nabla F_m \left( \boldsymbol{\theta}_m^i (t), \xi_m^i (t) \right),  \quad \mbox{$i \in [\tau]$}, \mbox{$m \in [M]$},   
\end{align}
where $\mathbb{E}_{\xi} \left[ \nabla F_m \left( \boldsymbol{\theta} (t), \xi_m (t) \right) \right] = \nabla F_m \left( \boldsymbol{\theta} (t)  \right)$, and $\boldsymbol{\theta}_m^1 (t) = \boldsymbol{\theta} (t)$. Thus, we have
\begin{align}\label{ConvSGDDevicemith_2}
\boldsymbol{\theta}_m^{\tau+1} (t) = \boldsymbol{\theta} (t) - \eta (t) \sum\nolimits_{i=1}^{\tau} \nabla F_m \left( \boldsymbol{\theta}_m^i (t), \xi_m^i (t) \right).    
\end{align}
We define 
\begin{align}\label{ConvgmtDefinition}
\boldsymbol{g}_m (t) \triangleq - \eta (t) \sum\nolimits_{i=1}^{\tau} \nabla F_m \left( \boldsymbol{\theta}_m^i (t), \xi_m^i (t) \right), \quad m \in [M].     
\end{align}
Device $m$ transmits a sparsified version of the local model update $\Delta \boldsymbol{\theta}_m (t) = \boldsymbol{g}_m (t)$, denoted by $\widehat{\boldsymbol{g}}_m (t)$, $m \in [M]$. Upon receiving $\widehat{\boldsymbol{g}}_m (t)$ from $K$ devices in $\mathcal{M} (t)$, the PS updates the model as follows:
\begin{align}\label{ConvPSGlobalUpdateModelDifScheduled}
\boldsymbol{\theta} (t+1) = \boldsymbol{\theta} (t) + \frac{1}{K} \sum\limits_{m \in \mathcal{M} (t)} \widehat{\boldsymbol{g}}_m (t).
\end{align}

To simplify the analysis, we assume that each participating device has the same link capacity. This can be guaranteed through the resource allocation technique given in \eqref{BCBandAllocEquatSubChann}. For further simplification, we consider the following sparsification scheme. Each device sparsifies its local model update
in a random fashion;
that is, at iteration $t$, it sets all but $q(t)$ entries, 
selected uniformly at random,
to zero. Then, it quantizes the remaining entries using a $32$-bit uniform quantization scheme. It is easy to verify that each device needs to transmit
\begin{align}\label{ConvDSGDNumberBits}
r \left(q (t)\right) = \log_2 \dbinom{d}{q (t)} + 33 \cdot q(t) \mbox{ bits},
\end{align}  
where $q (t)$ is set as the largest integer satisfying $r \left(q (t)\right) \le n_{m}C_{m}(t)$, for some $m \in \mathcal{M} (t)$, with $n_{m}$ given in \eqref{BCBandAllocEquatSubChann}.

We denote the set of all possible sparsity patterns of $\widehat{\boldsymbol{g}}_{m} (t)$ by $\cal S$, where $\left| \mathcal{S} \right| = \binom{d}{q(t)}$, $\forall m$, and define  $\rho(t) \triangleq \frac{q(t)}{d}$. We note that, once $K$ and the quantization scheme are fixed, $\rho(t)$ is an i.i.d. random variable whose distribution depends on the channel distribution.  


\begin{lemma}\label{LemmaMeanVarhatg}
Due to random sparsification, the sparsity pattern of each model update follows a uniform distribution. Thus, we have
\begin{subequations}\label{MeanVarhatg}
\begin{align}\label{Meanhatg}
\mathbb{E}_{S} \left[ \widehat{\boldsymbol{g}}_{m} (t) \right] = & \rho(t) {\boldsymbol{g}}_{m} (t), \\
\mathbb{E}_{S} \left[ \left\| \widehat{\boldsymbol{g}}_{m} (t) \right\|_2^2 \right] = & \rho(t) \left\| {\boldsymbol{g}}_{m} (t) \right\|_2^2, \quad \forall m \in [M], \label{Varhatg}
\end{align}
\end{subequations}
where $\mathbb{E}_S$ denotes the expectation with respect to the randomness of the sparsification technique. 
\end{lemma}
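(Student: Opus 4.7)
The plan is to reduce both identities to a single elementary fact: for the uniform random $q(t)$-subset $S\subseteq[d]$ that indexes the nonzero entries of $\widehat{\boldsymbol{g}}_m(t)$, each coordinate $j\in[d]$ satisfies $\Pr(j\in S)=q(t)/d=\rho(t)$. I will treat the 32-bit quantization applied to the surviving entries as exact (the lemma statement only averages over the sparsification randomness $\mathbb{E}_S$, so the quantization is taken as lossless at this resolution).

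First I would introduce the indicator random vector $\boldsymbol{\chi}\in\{0,1\}^d$ with $\chi_j=\mathbf{1}\{j\in S\}$ and write
\begin{align*}
\widehat{\boldsymbol{g}}_{m}(t)=\boldsymbol{\chi}\odot\boldsymbol{g}_{m}(t),
\end{align*}
i.e.\ coordinate-wise $(\widehat{\boldsymbol{g}}_{m}(t))_j=\chi_j\,(\boldsymbol{g}_{m}(t))_j$. Because $S$ is drawn uniformly from the $\binom{d}{q(t)}$ subsets of size $q(t)$, the marginal probability that a given index $j$ is retained is
\begin{align*}
\Pr(\chi_j=1)=\frac{\binom{d-1}{q(t)-1}}{\binom{d}{q(t)}}=\frac{q(t)}{d}=\rho(t),
\end{align*}
so $\mathbb{E}_S[\chi_j]=\rho(t)$ for every $j$. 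Taking expectation coordinate-wise, linearity of expectation gives $\mathbb{E}_S[(\widehat{\boldsymbol{g}}_{m}(t))_j]=\rho(t)(\boldsymbol{g}_{m}(t))_j$, which is \eqref{Meanhatg}.

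For \eqref{Varhatg}, I would expand the squared norm and exploit $\chi_j^2=\chi_j$ since $\chi_j\in\{0,1\}$:
\begin{align*}
\bigl\|\widehat{\boldsymbol{g}}_{m}(t)\bigr\|_2^2=\sum_{j=1}^{d}\chi_j^2\,(\boldsymbol{g}_{m}(t))_j^2=\sum_{j=1}^{d}\chi_j\,(\boldsymbol{g}_{m}(t))_j^2.
\end{align*}
Taking $\mathbb{E}_S$ and using $\mathbb{E}_S[\chi_j]=\rho(t)$ termwise yields $\mathbb{E}_S\bigl[\|\widehat{\boldsymbol{g}}_{m}(t)\|_2^2\bigr]=\rho(t)\|\boldsymbol{g}_{m}(t)\|_2^2$, which is exactly \eqref{Varhatg}.

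There is no real obstacle here; the result is essentially a one-line consequence of the uniform marginal $q(t)/d$. The only point worth flagging is that $\rho(t)$ is itself random (it depends on the channel through $q(t)$), but this is irrelevant for the proof because $\mathbb{E}_S$ conditions on the realization of the channel (and hence on $q(t)$), so $\rho(t)$ is deterministic inside $\mathbb{E}_S$. The identities should therefore be read as conditional statements given the channel realization, after which the unconditional form follows by the tower property if needed later in the convergence proof.
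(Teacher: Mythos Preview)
Your proof is correct and is essentially the same argument as the paper's: both reduce to the combinatorial fact that each coordinate is retained with marginal probability $\binom{d-1}{q(t)-1}/\binom{d}{q(t)}=q(t)/d=\rho(t)$. The paper phrases this as ``each entry of $\boldsymbol{g}_m(t)$ appears exactly $\binom{d-1}{q(t)-1}$ times'' when summing over all sparsity patterns, while you express it via the indicator vector $\boldsymbol{\chi}$ and $\mathbb{E}_S[\chi_j]=\rho(t)$; the content is identical.
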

\begin{proof}
For any specific sparsity pattern $s \in \cal S$, we denote the sparse vector $\widehat{\boldsymbol{g}}_{m} (t)$ by ${\boldsymbol{g}}^s_{m} (t)$, $\forall m$. We have
\newcommand\zeroequal{\mathrel{\overset{\makebox[0pt]{\mbox{\normalfont\tiny\sffamily (a)}}}{=}}}
\begin{subequations}
\begin{align}
\mathbb{E}_{S} \left[ \widehat{\boldsymbol{g}}_{m} (t) \right] = & \sum\limits_{s \in \mathcal{S}} {\boldsymbol{g}}^s_{m} (t) \Pr \{ s \in \mathcal{S} \} = \frac{1}{\binom{d}{q(t)}} \sum\limits_{s \in \mathcal{S}} {\boldsymbol{g}}^s_{m} (t) \zeroequal \frac{\binom{d-1}{q(t)-1}}{\binom{d}{q(t)}} {\boldsymbol{g}}_{m} (t)  = \rho(t) {\boldsymbol{g}}_{m} (t),\\
\mathbb{E}_{S} \left[ \left\| \widehat{\boldsymbol{g}}_{m} (t) \right\|_2^2 \right] = &\sum\limits_{s \in \mathcal{S}} \left\| {\boldsymbol{g}}^s_{m} (t) \right\|_2^2 \Pr \{ s \in \mathcal{S} \} = \frac{1}{\binom{d}{q(t)}} \sum\limits_{s \in \mathcal{S}} \left\| {\boldsymbol{g}}^s_{m} (t) \right\|_2^2 \zeroequal \frac{\binom{d-1}{q(t)-1}}{\binom{d}{q(t)}} \left\| {\boldsymbol{g}}_{m} (t) \right\|_2^2 \nonumber\\
= & \rho(t) \left\| {\boldsymbol{g}}_{m} (t) \right\|_2^2,
\end{align}
\end{subequations}
where (a) follows due to the symmetry in both $\sum\limits_{s \in \mathcal{S}} {\boldsymbol{g}}^s_{m} (t)$ and $\sum\limits_{s \in \mathcal{S}} \left\| {\boldsymbol{g}}^s_{m} (t) \right\|_2^2$, where each entry of ${\boldsymbol{g}}_{m} (t)$ appears exactly $\binom{d-1}{q(t)-1}$ times.   
\end{proof}

\begin{assumption}\label{AssumpSmoothLoss}
Loss functions $F_1, \dots, F_M$ are all $L$-smooth; that is, $\forall \boldsymbol{v}, \boldsymbol{w} \in \mathbb{R}^d$, 
\begin{align}\label{ConvLSmoothCondit}
F_m(\boldsymbol{v}) - F_m(\boldsymbol{w}) \le \langle \boldsymbol{v} - \boldsymbol{w} , \nabla F_m (\boldsymbol{w}) \rangle + \frac{L}{2} \left\| \boldsymbol{v} - \boldsymbol{w} \right\|^2_2, \quad \forall m \in [M].
\end{align}
\end{assumption}

\begin{assumption}\label{AssumpStrongConvexLoss}
Loss functions $F_1, \dots, F_M$ are all $\mu$-strongly convex; that is, $\forall \boldsymbol{v}, \boldsymbol{w} \in \mathbb{R}^d$, 
\begin{align}\label{ConvMuStConvexCondit}
F_m(\boldsymbol{v}) - F_m(\boldsymbol{w}) \ge \langle \boldsymbol{v} - \boldsymbol{w} , \nabla F_m (\boldsymbol{w}) \rangle + \frac{\mu}{2} \left\| \boldsymbol{v} - \boldsymbol{w} \right\|^2_2, \quad \forall m \in [M].      
\end{align}
\end{assumption}

\begin{assumption}\label{AssumpBoundedVarGradient}
The expected squared $l_2$-norm of the stochastic gradients is bounded; that is,
\begin{align}\label{ConvNorm2Bound}
\mathbb{E}_{\xi} \left [ \left\| \nabla F_m \left( \boldsymbol{\theta}_m (t), \xi_m (t) \right) \right\|^2_2 \right] \le G^2, \quad \forall m \in [M], \; \forall t.      
\end{align}
\end{assumption}

\subsection{Convergence Result}\label{SubSecConvRes}
We assume that the device scheduling is uniformly random across the devices, which is consistent with the device scheduling policies introduced in Section \ref{SecScheduling}. Thus, the probability that a device is scheduled for transmission at any iteration is $\frac{K}{M}$.

\begin{theorem}\label{Theoremtheta_thetastarKequalM}
Let $0 < \eta(t) \le \min \left\{ 1, \frac{1}{\mu \tau} \right\}$, $\forall t$. We have
\begin{subequations}\label{ConvTheoremtheta_thetastarKequalM}
\begin{align}\label{ConvTheoremtheta_thetastarKequalM_main}
\mathbb{E} \left[ \left\| \boldsymbol{\theta} (t) - {\boldsymbol{\theta}}^* \right\|_2^2 \right] \le  \Bigg( \prod_{i=0}^{t-1} A(i) \Bigg) \left\| {\boldsymbol{\theta}} (0) - {\boldsymbol{\theta}}^* \right\|_2^2 + \sum_{j=0}^{t-1} B(j) \prod_{i=j+1}^{t-1} A(i),  
\end{align}
where 
\begin{align}\label{ConvTheoremtheta_thetastarKequalM_AB}
A(i) \triangleq & 1 - \mu \rho(i) \eta (i) \left( \tau - \eta(i) (\tau - 1) \right),\\ 
B(i) \triangleq & \frac{(M-K)\rho(i) \eta^2(i) \tau^2 G^2}{K(M-1)} + \rho(i) \left( 1+ \mu (1- \eta(i)) \right) \eta^2(i) G^2 \frac{\tau (\tau-1)(2\tau-1)}{6} \nonumber\\
&+ \rho(i) \eta^2(i) (\tau^2 + \tau-1) G^2  + 2 \rho(i) \eta(i) (\tau - 1) \Gamma, 
\end{align}
\end{subequations}
and the expectation is with respect to the stochastic gradient function, the quantization technique, and the randomness of device scheduling.
\end{theorem}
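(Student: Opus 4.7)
The plan is to derive a one-step contraction of the form $\mathbb{E}\|\boldsymbol{\theta}(t+1)-\boldsymbol{\theta}^*\|_2^2 \le A(t)\,\mathbb{E}\|\boldsymbol{\theta}(t)-\boldsymbol{\theta}^*\|_2^2 + B(t)$ and unroll it to obtain \eqref{ConvTheoremtheta_thetastarKequalM_main}. The key device is the \emph{virtual} full-participation iterate $\bar{\boldsymbol{\theta}}(t+1) \triangleq \boldsymbol{\theta}(t) + \rho(t)\bar{\boldsymbol{g}}(t)$ with $\bar{\boldsymbol{g}}(t) \triangleq \frac{1}{M}\sum_{m=1}^{M}\boldsymbol{g}_m(t)$. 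Since each device is scheduled with probability $K/M$ and Lemma~\ref{LemmaMeanVarhatg} gives $\mathbb{E}_S[\widehat{\boldsymbol{g}}_m(t)] = \rho(t)\boldsymbol{g}_m(t)$, the conditional expectation of $\boldsymbol{\theta}(t+1)$ given the local iterates equals $\bar{\boldsymbol{\theta}}(t+1)$. This unbiasedness furnishes the Pythagorean decomposition $\mathbb{E}\|\boldsymbol{\theta}(t+1)-\boldsymbol{\theta}^*\|_2^2 = \mathbb{E}\|\bar{\boldsymbol{\theta}}(t+1)-\boldsymbol{\theta}^*\|_2^2 + \mathbb{E}\|\boldsymbol{\theta}(t+1)-\bar{\boldsymbol{\theta}}(t+1)\|_2^2$, separating a bias contribution from a combined scheduling/sparsification variance.

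For the bias term I would expand the square and rewrite the cross product $2\rho(t)\mathbb{E}\langle\bar{\boldsymbol{g}}(t),\boldsymbol{\theta}(t)-\boldsymbol{\theta}^*\rangle$ using the telescoping identity $\langle -\nabla F_m(\boldsymbol{\theta}_m^i),\boldsymbol{\theta}(t)-\boldsymbol{\theta}^*\rangle = \langle -\nabla F_m(\boldsymbol{\theta}_m^i),\boldsymbol{\theta}(t)-\boldsymbol{\theta}_m^i\rangle + \langle -\nabla F_m(\boldsymbol{\theta}_m^i),\boldsymbol{\theta}_m^i-\boldsymbol{\theta}^*\rangle$. The second piece is controlled by $\mu$-strong convexity (Assumption~\ref{AssumpStrongConvexLoss}), producing the upper bound $-(F_m(\boldsymbol{\theta}_m^i)-F_m^*)-\frac{\mu}{2}\|\boldsymbol{\theta}_m^i-\boldsymbol{\theta}^*\|_2^2$. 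Averaging over $m$ and using $F^* - \frac{1}{M}\sum_m F_m^* = \Gamma$ from \eqref{GammaDef} produces the $(1-\mu\rho(t)\eta(t)(\tau-\eta(t)(\tau-1)))$ coefficient in $A(t)$ and the $2\rho(t)\eta(t)(\tau-1)\Gamma$ summand of $B(t)$. The first piece is bounded by Cauchy--Schwarz (or Young's inequality, which is where the $(1+\mu(1-\eta(i)))$ prefactor enters) combined with the local-drift estimate $\mathbb{E}\|\boldsymbol{\theta}_m^i(t)-\boldsymbol{\theta}(t)\|_2^2 \le (i-1)^2\eta^2(t)G^2$, obtained by unrolling \eqref{ConvSGDDevicem} and applying Assumption~\ref{AssumpBoundedVarGradient}; the subsequent sum $\sum_{i=1}^{\tau}(i-1)^2 = \tau(\tau-1)(2\tau-1)/6$ explains the cubic-in-$\tau$ summand of $B(t)$.

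For the variance term I would exploit the conditional independence of the scheduling and sparsification, given the local iterates, to split $\mathbb{E}\|\boldsymbol{\theta}(t+1)-\bar{\boldsymbol{\theta}}(t+1)\|_2^2$ into a pure scheduling variance and a pure sparsification variance (the cross term vanishes because $\mathbb{E}_{\mathcal{M}}[\frac{1}{K}\sum_{m\in\mathcal{M}}\widehat{\boldsymbol{g}}_m - \frac{1}{M}\sum_m\widehat{\boldsymbol{g}}_m]=0$ with $\{\widehat{\boldsymbol{g}}_m\}$ held fixed). Uniform sampling without replacement provides the textbook identity $\mathbb{E}_{\mathcal{M}}\|\frac{1}{K}\sum_{m\in\mathcal{M}}x_m - \frac{1}{M}\sum_m x_m\|_2^2 = \frac{M-K}{K(M-1)}\cdot\frac{1}{M}\sum_m\|x_m-\bar{x}\|_2^2$; applied to $x_m = \widehat{\boldsymbol{g}}_m(t)$ and combined with Lemma~\ref{LemmaMeanVarhatg} together with $\mathbb{E}\|\boldsymbol{g}_m(t)\|_2^2 \le \tau^2\eta^2(t)G^2$ (Jensen plus Assumption~\ref{AssumpBoundedVarGradient}), it produces exactly $\frac{(M-K)\rho(t)\eta^2(t)\tau^2 G^2}{K(M-1)}$. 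The remaining $\rho(t)\eta^2(t)(\tau^2+\tau-1)G^2$ contribution arises from the cross-device sparsification variance (again via Lemma~\ref{LemmaMeanVarhatg}) together with the $\rho^2\|\bar{\boldsymbol{g}}\|_2^2$ leftover of the bias expansion. Collecting all pieces yields the desired one-step recursion, which unrolls to \eqref{ConvTheoremtheta_thetastarKequalM_main} by induction on $t$.

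The principal difficulty is the simultaneous bookkeeping of three independent randomness sources (stochastic gradients, random sparsification, uniform device scheduling) while propagating the effect of $\tau$ consecutive local SGD steps. In particular, the drift bound between $\boldsymbol{\theta}_m^i(t)$ and $\boldsymbol{\theta}(t)$ must be sharp enough to keep the dependence on $\tau$ at the stated cubic order, and the sampling-without-replacement variance identity has to be applied carefully so that the $(M-K)/(M-1)$ factor correctly vanishes in the full-participation limit $K=M$, in which case only the sparsification and stochastic-gradient noise should survive.
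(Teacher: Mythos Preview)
Your proposal is correct and follows essentially the same route as the paper: a one-step recursion obtained by decomposing around an auxiliary full-participation iterate, with the scheduling variance controlled via the sampling-without-replacement identity, the cross term handled by $\mu$-strong convexity (producing the contraction factor $A(t)$ and the $\Gamma$ term), and the local drift bounded by $\sum_{i=1}^{\tau}(i-1)^2=\tau(\tau-1)(2\tau-1)/6$ using Assumption~\ref{AssumpBoundedVarGradient}.

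The only organizational difference is the location of the auxiliary iterate. The paper places it at $\boldsymbol{\upsilon}(t+1)=\boldsymbol{\theta}(t)+\tfrac{1}{M}\sum_m\widehat{\boldsymbol{g}}_m(t)$, i.e., it averages out only the scheduling randomness, so the sparsification noise is absorbed inside the ``bias'' term via $\mathbb{E}_S[\widehat{\boldsymbol{g}}_m]=\rho(t)\boldsymbol{g}_m$ and the bound $\mathbb{E}\big\|\tfrac{1}{M}\sum_m\widehat{\boldsymbol{g}}_m\big\|_2^2\le \rho(t)\eta^2(t)\tau^2G^2$. You instead place it at $\bar{\boldsymbol{\theta}}(t+1)=\boldsymbol{\theta}(t)+\rho(t)\bar{\boldsymbol{g}}(t)$, averaging out both scheduling and sparsification, and then split the variance in two layers. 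Since $\rho^2(t)\mathbb{E}\|\bar{\boldsymbol{g}}(t)\|_2^2+\mathbb{E}\big\|\tfrac{1}{M}\sum_m\widehat{\boldsymbol{g}}_m-\rho(t)\bar{\boldsymbol{g}}\big\|_2^2=\mathbb{E}\big\|\tfrac{1}{M}\sum_m\widehat{\boldsymbol{g}}_m\big\|_2^2$, the two decompositions are algebraically equivalent and lead to the same constants; your version simply makes the Pythagorean structure (and hence the independence of the three randomness sources) a bit more explicit.
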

\begin{proof}
See Appendix \ref{AppProofTheoremTheta_ThetaStarKequalM}. 
\end{proof}

\begin{corollary}\label{CorrConvF_FstarKequalM}
From the $L$-smoothness of function $F(\cdot)$, after $T$ global iterations, we have
\begin{align}\label{ConvF_FstarKequalM}
\mathbb{E} \left[ F( \boldsymbol{\theta} (T)) \right] - F^* \le & \frac{L}{2} \mathbb{E} \left[ \left\| \boldsymbol{\theta} (T) - {\boldsymbol{\theta}}^* \right\|_2^2 \right] \nonumber \\
\le & \frac{L}{2} \prod_{i=0}^{T-1} A(i) \left\| {\boldsymbol{\theta}} (0) - {\boldsymbol{\theta}}^* \right\|_2^2 + \frac{L}{2} \sum_{j=0}^{T-1} B(j) \prod_{i=j+1}^{T-1} A(i), 
\end{align}
where the last inequality follows from Theorem \ref{Theoremtheta_thetastarKequalM}. Having a decreasing learning rate $\mathop {\lim }\limits_{t \to \infty } \eta(t) = 0$, it is easy to verify that $\mathop {\lim }\limits_{T \to \infty } \mathbb{E} \left[ F( \boldsymbol{\theta} (T)) \right] - F^* = 0$.
\end{corollary}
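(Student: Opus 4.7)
The plan is to chain three ingredients: (i) $L$-smoothness of $F$ to pass from the parameter gap to the function value gap, (ii) the bound of Theorem~\ref{Theoremtheta_thetastarKequalM} to control the parameter gap, and (iii) an elementary asymptotic argument to drive the resulting upper bound to zero.

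For the first inequality of \eqref{ConvF_FstarKequalM}, I would observe that since each $F_m$ is $L$-smooth by Assumption~\ref{AssumpSmoothLoss}, the average $F = \frac{1}{M}\sum_{m=1}^{M} F_m$ inherits the same $L$-smoothness. Applying \eqref{ConvLSmoothCondit} (for $F$) at $\boldsymbol{v} = \boldsymbol{\theta}(T)$, $\boldsymbol{w} = \boldsymbol{\theta}^*$, and using the first-order optimality condition $\nabla F(\boldsymbol{\theta}^*) = \boldsymbol{0}$, yields $F(\boldsymbol{\theta}(T)) - F^* \le \frac{L}{2}\left\|\boldsymbol{\theta}(T)-\boldsymbol{\theta}^*\right\|_2^2$. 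Taking expectations on both sides (jointly over stochastic gradients, sparsity patterns, and device scheduling) produces the first line of \eqref{ConvF_FstarKequalM}. The second inequality is then a direct substitution of the bound supplied by Theorem~\ref{Theoremtheta_thetastarKequalM} specialized to $t = T$, and is purely a rewriting.

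The only substantive step is the asymptotic claim, which I would handle by treating the two summands in \eqref{ConvF_FstarKequalM} separately. For the first term, the bound $1-x \le e^{-x}$ gives
\[
\prod_{i=0}^{T-1} A(i) \le \exp\!\Big(\!-\!\sum_{i=0}^{T-1} \mu\rho(i)\eta(i)\bigl(\tau-\eta(i)(\tau-1)\bigr)\Big),
\]
and since $\eta(i)\le 1$ forces $\tau-\eta(i)(\tau-1)\ge 1$, this product tends to zero whenever $\sum_i \rho(i)\eta(i) = \infty$. For the weighted sum $\sum_{j=0}^{T-1}B(j)\prod_{i=j+1}^{T-1}A(i)$, I would split at an intermediate index $j^\star(T) \to \infty$ with $T-j^\star(T) \to \infty$: for $j \ge j^\star$ each $B(j) = O(\eta(j))$ is small because $\eta(j)\to 0$, while for $j < j^\star$ the tail product $\prod_{i=j+1}^{T-1}A(i)$ is driven to zero by the exponential bound above.

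The main obstacle, which I would flag at the outset of the asymptotic step, is that ``$\eta(t)\to 0$'' is by itself insufficient: the contribution $2\rho(j)\eta(j)(\tau-1)\Gamma$ to $B(j)$ is only $O(\eta(j))$, and for a constant step size the weighted sum $\sum_j \eta\prod_{i=j+1}^{T-1}(1-\mu\rho\eta)$ stays bounded away from zero (a classical SGD residual). I would therefore impose the standard Robbins--Monro-type non-summability $\sum_i \eta(i) = \infty$ in addition to $\eta(t)\to 0$ (e.g.\ $\eta(i) \propto 1/i$), under which the exponential bound and the two-scale split combine to send both summands to zero, establishing the limit.
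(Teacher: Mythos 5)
Your treatment of the two displayed inequalities matches the paper exactly: the first line is $L$-smoothness of $F$ (inherited from the $F_m$'s) combined with $\nabla F(\boldsymbol{\theta}^*)=\boldsymbol{0}$, and the second is Theorem \ref{Theoremtheta_thetastarKequalM} evaluated at $t=T$; the paper offers nothing more for those steps. Your further observation that $\eta(t)\to 0$ alone cannot force $\prod_i A(i)\to 0$, and that a non-summability condition $\sum_i\eta(i)=\infty$ must be imposed, is a correct refinement of the paper's bare assertion that the limit is ``easy to verify.''

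However, the final step of your argument does not close. You correctly flag $2\rho(j)\eta(j)(\tau-1)\Gamma$ in $B(j)$ as the obstacle, but the two-scale split does not dispose of it: this term is of the \emph{same} order as $1-A(j)=\mu\rho(j)\eta(j)\left(\tau-\eta(j)(\tau-1)\right)$, so by the telescoping identity $\sum_{j=0}^{T-1}(1-A(j))\prod_{i=j+1}^{T-1}A(i)=1-\prod_{i=0}^{T-1}A(i)$ its weighted contribution converges to a quantity at least $\frac{2(\tau-1)\Gamma}{\mu\tau}\bigl(1-\prod_{i}A(i)\bigr)\to\frac{2(\tau-1)\Gamma}{\mu\tau}$, not to zero, under \emph{any} schedule with $\sum_i\eta(i)=\infty$. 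Splitting at an index $j^\star(T)$ does not help: for $j\ge j^\star$ the fact that each $B(j)$ is $O(\eta(j))$ is irrelevant, because the weights $\prod_{i>j}A(i)$ accumulate in a way that exactly cancels the $\eta(j)$ factor. Consequently the right-hand side of \eqref{ConvF_FstarKequalM} has limit at least $\frac{L(\tau-1)\Gamma}{\mu\tau}>0$ whenever $\tau\ge 2$ and $\Gamma>0$; the vanishing-limit claim can be extracted from this bound only when $\tau=1$ or $\Gamma=0$, in which case $B(j)=O(\eta^2(j))=o(1-A(j))$ and your split does work. To be fair, this residual is glossed over by the corollary's own ``easy to verify''; but as written, your proof asserts a limit that the bound does not support in general, and an honest version would either restrict to $\tau=1$ or $\Gamma=0$, or state the conclusion as $\limsup_{T}\mathbb{E}[F(\boldsymbol{\theta}(T))]-F^*\le \frac{L(\tau-1)\Gamma}{\mu\tau}$.
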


\begin{remark}\label{RemConvKequalMpi}
For a limited $T$ value, it is hard to clearly discuss the impact of $\rho(i)$ on the final average loss, since it depends on the other parameters, $\mu$, $L$, $M$, $K$, $\Gamma$ and $G$. However, since $A(i)$ reduces with $\rho(i)$, one can observe that the convergence speed increases with $\rho(i)$. 
\end{remark}

\begin{remark}\label{RemConvKlessMpi}
The first term in $B(i)$, which is due to device scheduling, is a decreasing function of $K$, where we note that $\rho(i)$ reduces with $K$ due to resource sharing. Thus, $B(i)$ reduces with $K$, and $K = M$ minimizes $B(i)$. On the other hand, since $A(i)$ increases with $K$, which is due to the reduction in $\rho(i)$, the impact of $K$ on the convergence performance is complicated. We will observe in Fig. \ref{Fig_Convergence} that for different setting parameters different $K$ values provide the best convergence performance.     
However, when the resources are abundant such that $\rho(t) = 1$, $\forall t$, it is trivially known that the full devices participation scenario, i.e., $K=M$, provides the best performance; this is corroborated by the result in Corollary \ref{CorrConvF_FstarKequalM}, where, for $\rho(t) = 1$, $\forall t$, $K=M$ gives the best performance.
\end{remark}

\begin{remark}\label{RemTau}
Similarly to $\rho(i)$, the impact of $\tau$ on the convergence rate is complicated. From $A(i)$, it is clear that the convergence speed increases with $\tau$, which verifies the observations made in \cite{StichLocalSGD}, where increasing the number of local model updates is proposed to increase the convergence speed. On the other hand, $B(i)$ increases with $\tau$, limiting the value of $\tau$ providing the best performance. We highlight that the last term in $B(i)$ reflects the impact of bias in the data distribution on the convergence rate. It is obvious that a higher $\tau$ value would intensify the deterioration of this term on the convergence rate.        
\end{remark}

\section{Numerical Experiments}\label{SecExperiments}

Here we compare the performance of different scheduling policies for image classification on the MNIST dataset \cite{LeCunMNIST} with $60000$ training and $10000$ test samples. We train a multi-layer perceptron neural network with a single hidden layer with $256$ parameters, in which case the total number of parameters is $d = 203530$, where \textit{softmax} is utilized as the activation function of the output layer.

\begin{figure}[t!]
\centering
\centering
\includegraphics[scale=0.8,trim={20pt 7pt 45pt 40pt},clip]{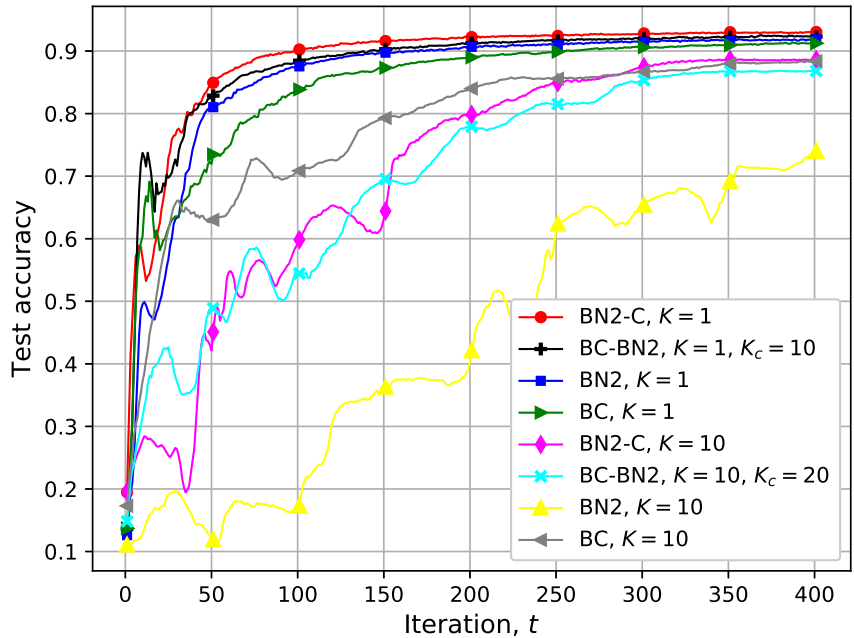}
\caption{Performance of different scheduling policies for IID data distribution with $M=40$, $B=1000$ and $n = 5 \times 10^3$.}
\label{Fig_IID_K_1_5}
\end{figure}

We consider two data distribution scenarios: in the \textit{IID} case, data samples at each device are selected at random from the training set; while in the \textit{non-IID}  case, each device has samples from only two labels/classes chosen randomly, and half of the local data samples are selected at random from each chosen label/class. We utilize ADAM \cite{ADAMDC} and AdaGrad \cite{AdaGradRef} optimizers to train the neural network for the IID and non-IID data distribution scenarios, respectively.

For the experiments, we consider $M=40$ devices, each with $B=1000$ training data samples. For the transmission of the local updates, we assume $n=5 \times 10^3$ symbols available at each global iteration, and we assume a noise variance of $\sigma^2 = 1$, and average power constraint $\bar{P} = 1$. We set the number of local iterations at the devices to $\tau = 3$. We measure the performance as the accuracy with respect to the test samples, called \textit{test accuracy}, versus the iteration count at the PS, $t$. For numerical experiments, we consider the proposed D-SGD quantization scheme since it leads to a better accuracy than the random sparsification technique introduced for the convergence analysis.   

In Fig. \ref{Fig_IID_K_1_5}, we compare the performance of different scheduling policies for the IID data distribution scenario. 
The goal here is to find the value of $K$ resulting in the best performance for each scheduling policy. To this end, we consider two different values, $K = 1$ and $K=10$, for each scheduling policy, where for BC-BN2 we set $K_c = 10$ and $K_c = 20$, respectively. We observe in Fig. \ref{Fig_IID_K_1_5} that, for each scheduling policy, increasing $K$ deteriorates the accuracy in terms of both the convergence speed and the final accuracy level. We did not include the results for other $K$ values, as we have observed that the performance of each scheduling policy deteriorates with increasing $K$. Thus, we focus on $K=1$, which, based on our observations, provides the best performance for each policy. 
This illustrates that, with IID local data samples, sending a more accurate update from a single device (which is scheduled at random thanks to the symmetry across the devices in our model) provides a faster convergence rate in the long-term than sending less accurate updates from multiple devices. 
For comparison, we provide the final accuracy level of each scheduling policy for $K=1$. These are given by $91.2\%$, $91.7\%$, $92.3\%$ and $93.1\%$ for BC, BN2, BC-BN2 and BN2-C, respectively.
As can be seen, BN2-C provides the best performance in terms of the convergence speed as well as the final accuracy level. The improvement of BC-BN2 over BN2 is marginal, but both outperform BC. These results illustrate that, given IID data distribution, scheduling devices according to both the significance of their model updates and their channel conditions provides gains in terms of accuracy. Also, from the superiority of BN2 over BC, we conclude that, to obtain the best performance for the IID scenario, the significance of the model updates, captured by the $l_2$-norm of the local model updates, plays a more important role in the accuracy performance than the channel conditions.
On the other hand, for large $K$, such as $K=10$ (which does not provide the best performance in this experiment), it is important to consider the channel conditions for scheduling in order to make sure that the scheduled devices can send sufficient information about their local updates, rather than scheduling the devices based only on the $l_2$-norm of their model updates.   

\begin{figure}[t!]
\centering
\begin{subfigure}{.5\textwidth}
  \centering
  \includegraphics[scale=0.55,trim={20pt 7pt 36pt 40pt},clip]{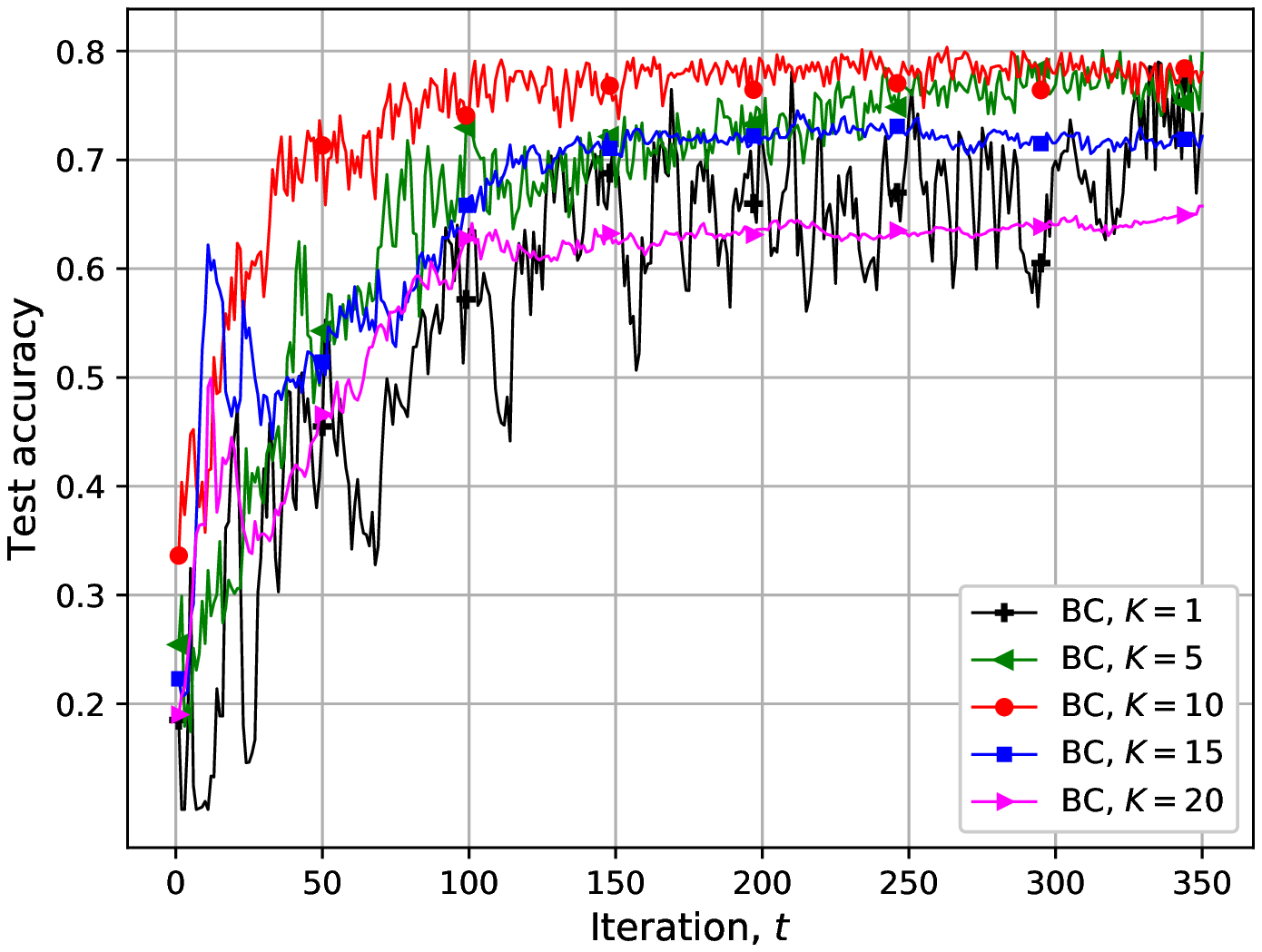}\vspace{0cm}
  \caption{BC scheduling policy}
  \label{Fig_nonIID2_BC}
\end{subfigure}%
\begin{subfigure}{.5\textwidth}
  \centering
  \includegraphics[scale=0.55,trim={20pt 7pt 36pt 40pt},clip]{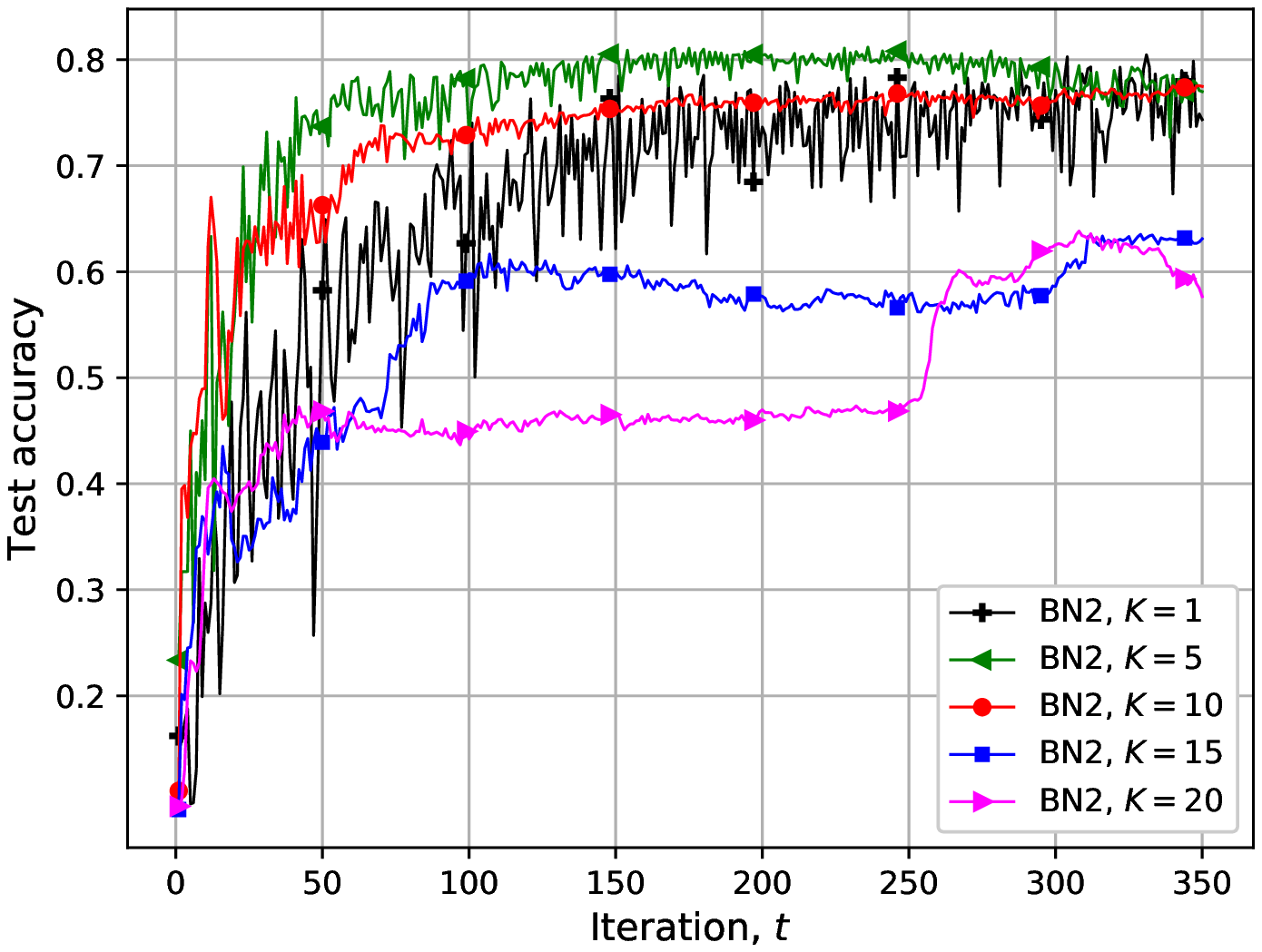}\vspace{0cm}
  \caption{BN2 scheduling policy}
  \label{Fig_nonIID2_BN2}
\end{subfigure}\\\vspace{.2cm}
\begin{subfigure}{.49\textwidth}
  \centering
  \includegraphics[scale=0.55,trim={20pt 7pt 36pt 40pt},clip]{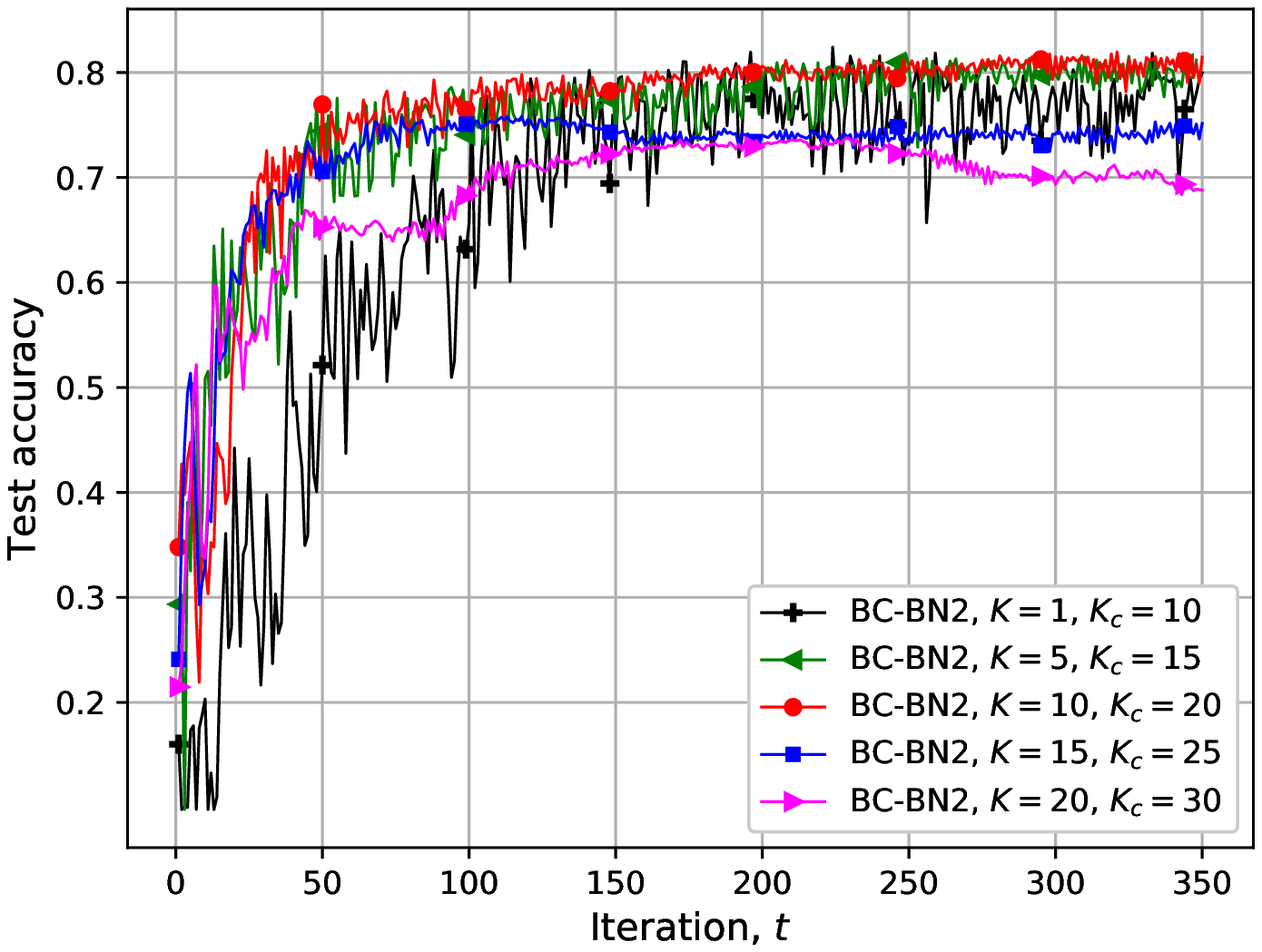}\vspace{0cm}
  \caption{BC-BN2 scheduling policy}
  \label{Fig_nonIID2_BC_BN2}
\end{subfigure}
\begin{subfigure}{.49\textwidth}
  \centering
  \includegraphics[scale=0.55,trim={20pt 7pt 36pt 40pt},clip]{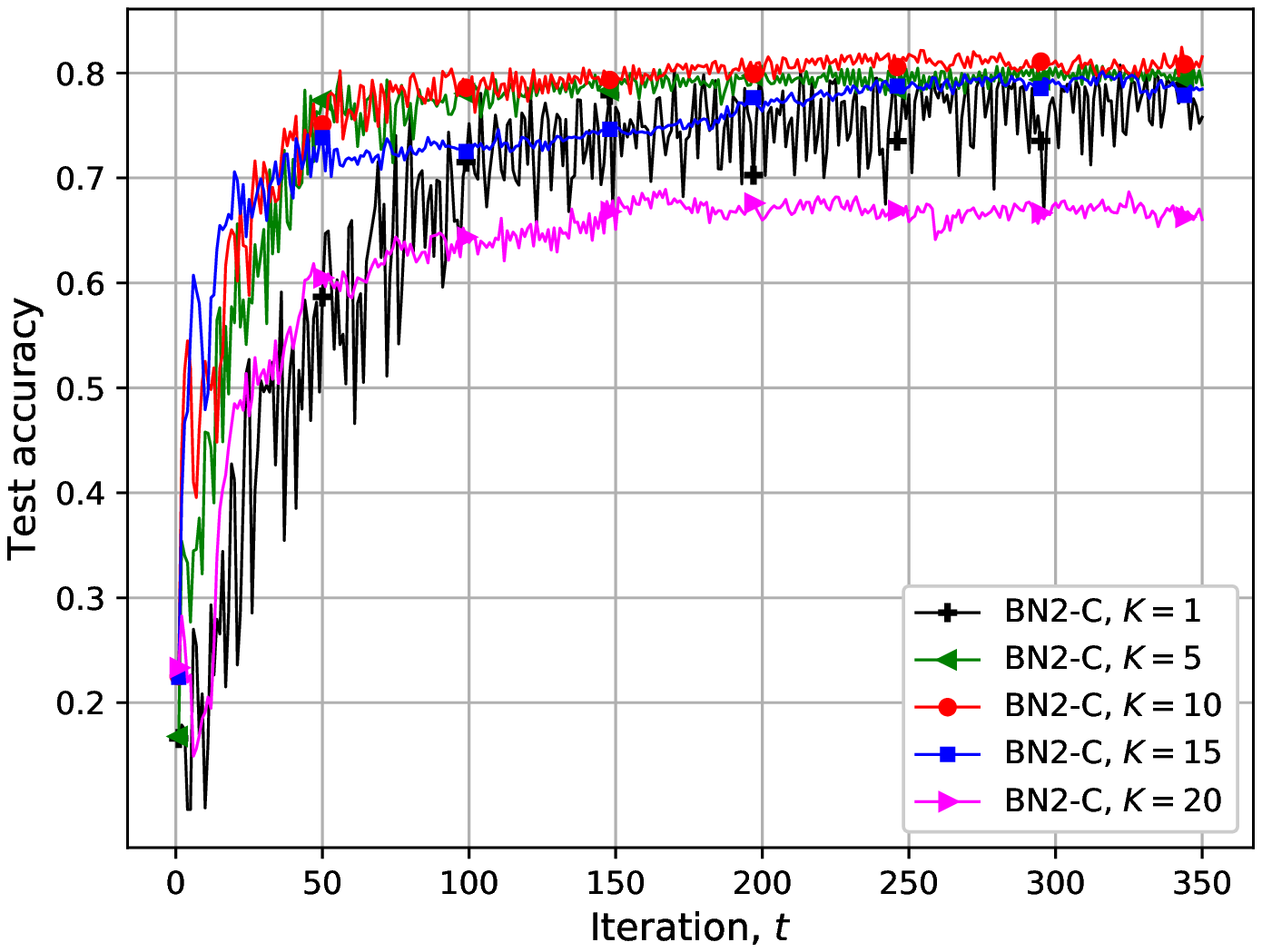}\vspace{0cm}
  \caption{BN2-C scheduling policy}
  \label{Fig_nonIID2_BN2_C}
\end{subfigure}
\caption{Performance of different scheduling policies for non-IID data distribution with $M=40$, $B=1000$ and $n = 5 \times 10^3$.}
\label{Fig_nonIID2}
\end{figure}

In Fig. \ref{Fig_nonIID2}, we investigate the performance of these different scheduling policies for the non-IID scenario. As can be seen, for all the scheduling policies, unlike in the IID case, scheduling a single device results in instability of the learning performance appearing as fluctuations in their accuracy levels over iterations. In the non-IID scenario, the local model update at each device is biased due to the biased local datasets, and scheduling a single device provides inaccurate information and causes instability in the performance in the long-term. On the other hand, increasing $K$ (sharing resources among more devices) reduces the accuracy at which the scheduled devices can transmit their model updates. As a result, it is expected that a moderate $K$ value would provide the best performance, which is confirmed with our simulation results. For the setting under consideration, $K=10$ provides the best final accuracy for BC, BC-BN2 and BN2-C, while $K=5$ performs better for BN2, although $K=10$ shows a more stable accuracy performance with a higher final accuracy level. Similarly to the IID scenario, we observe that it is essential to consider the channel conditions for higher $K$ values in order to make sure that the devices can transmit enough information. Also, as can be seen from the performance of BC for $K=10$, when scheduling based only on the channel conditions, the performance is more unstable, unless a relatively large number of devices are scheduled, in which case the accuracy level deteriorates. 
We highlight that, compared to the channel conditions, scheduling based on the significance of the model updates has a greater impact on the performance at the initial iterations when the gradients are more aggressive. On the other hand, it is important to consider the channel conditions at later iterations when approaching the optimum solution, since the SGD algorithm is more vulnerable to the noise compared to the initial iterations, and a more accurate estimate of the model update at each participating device is required for robust communication against the noise.
For comparison, the best final accuracy levels for BC, BN2, BC-BN2 and BN2-C are $78\%$, $77.5\%$, $81.5\%$ and $81.7\%$, respectively.
It can be seen that BN2-C and BC-BN2 outperform BC and BN2 in terms of the accuracy level, highlighting the importance of scheduling devices based on both the channel conditions and the model updates at the devices for the non-IID scenario.

\begin{figure}[t!]
\centering
\begin{subfigure}{.5\textwidth}
  \centering
  \includegraphics[scale=0.55,trim={8pt 7pt 36pt 40pt},clip]{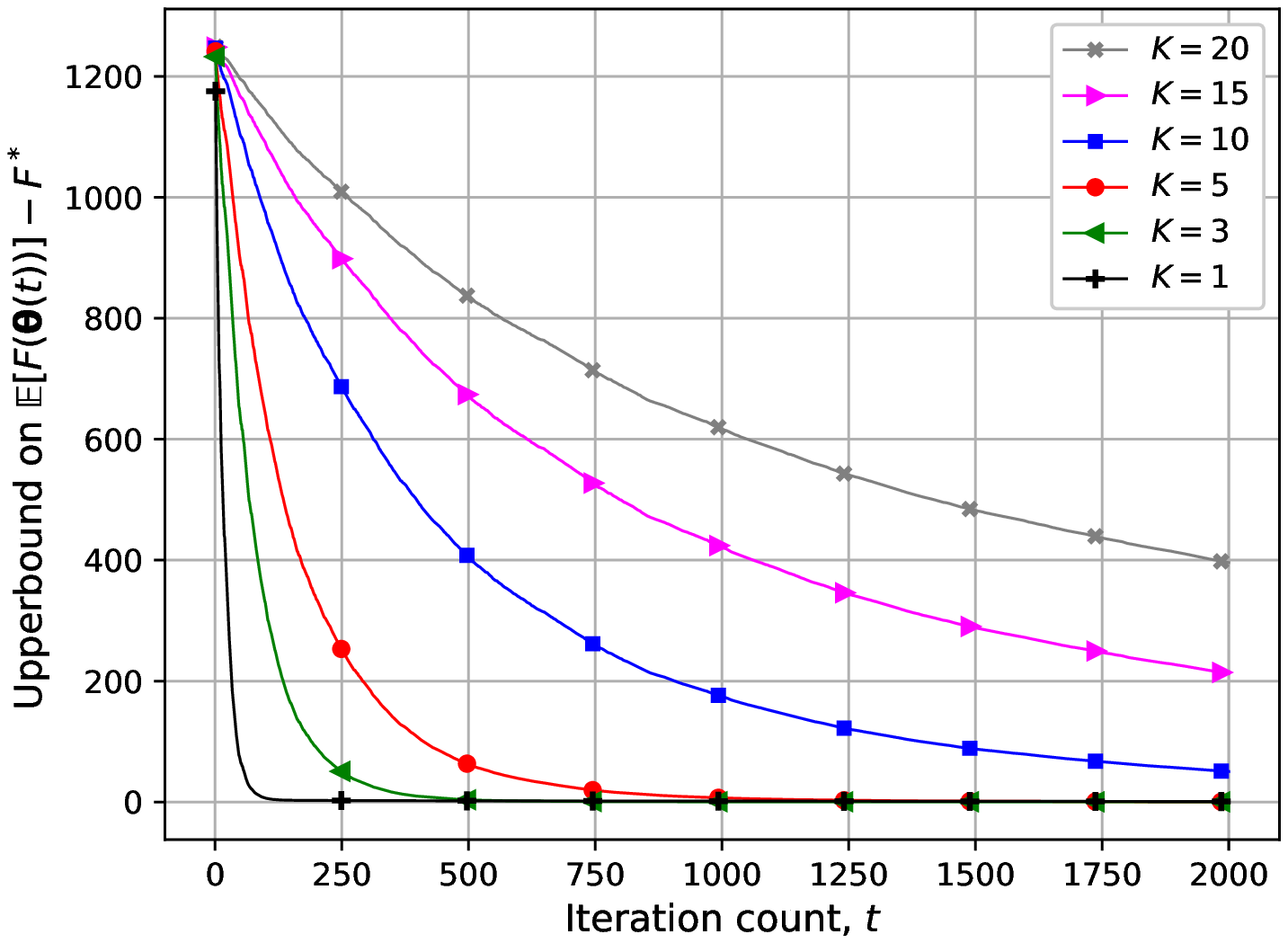}\vspace{0cm}
  \caption{$\Gamma = 1$ and $G=1$}
  \label{Fig_Convergence_G1}
\end{subfigure}%
\begin{subfigure}{.5\textwidth}
  \centering
  \includegraphics[scale=0.55,trim={8pt 7pt 36pt 40pt},clip]{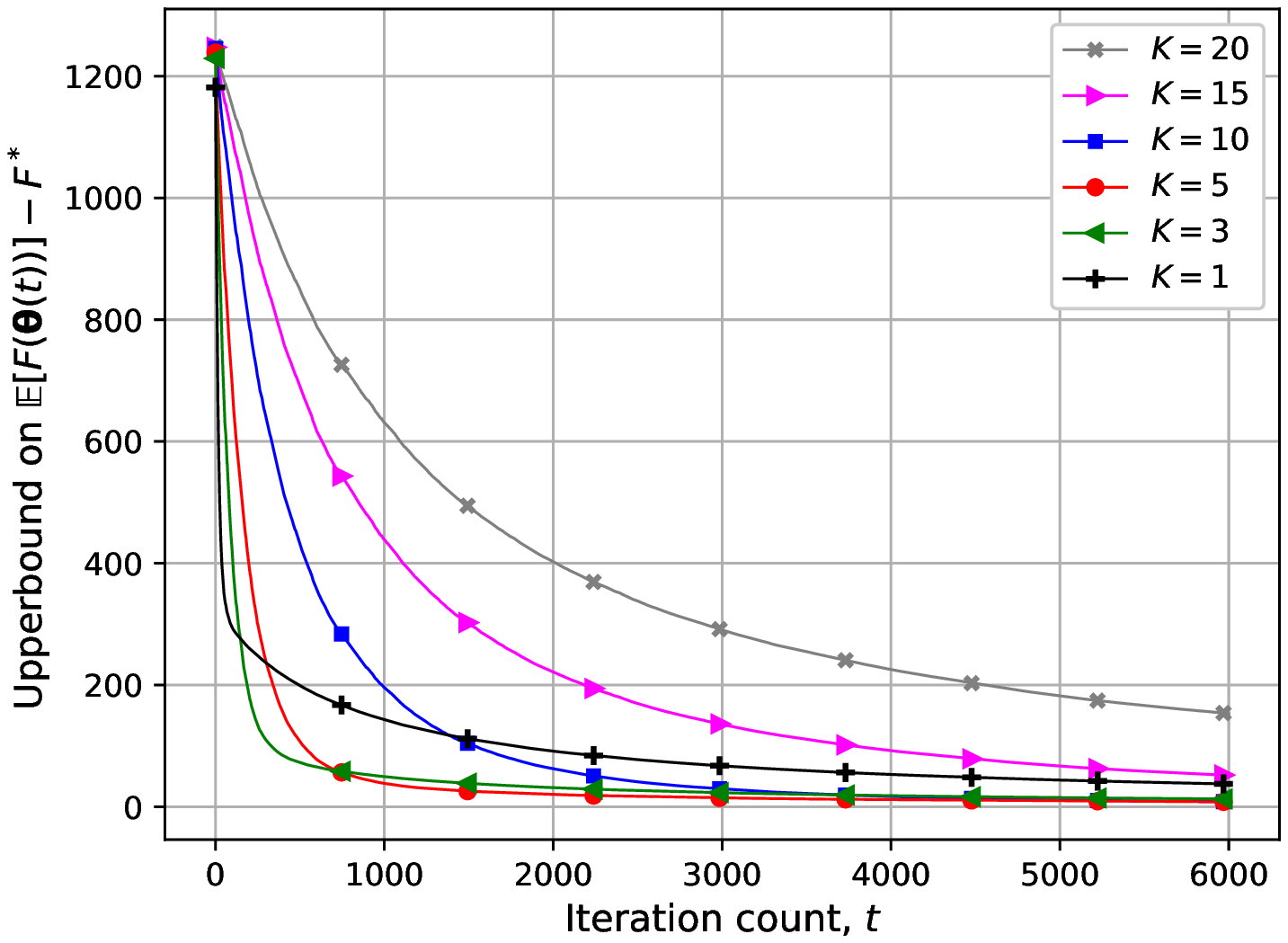}\vspace{0cm}
  \caption{$\Gamma = 10$ and $G=10$}
  \label{Fig_Convergence_G100}
\end{subfigure}
\caption{Upper bound on $\mathbb{E} \left[ F( \boldsymbol{\theta} (T)) \right] - F^*$ for different $K$ values, $K \in \{ 1, 3, 5, 10, 15, 20\}$, considering $M=100$, $n = 10^5$, $\tau = 3$, $\mu = 1$, $L=5$, $\left\| \boldsymbol{\theta}(0) - \boldsymbol{\theta}^* \right\|_2^2 = 500$ and $\eta (t) = \frac{1000}{\mu \tau (t + 1000)}$.}
\label{Fig_Convergence}
\end{figure}

Next we investigate the convergence result, presented in Corollary \ref{CorrConvF_FstarKequalM}, for various $K$ values. We consider two data distribution scenarios, where $(\Gamma, G) \in \{(1, 1), (10, 10)\}$. The case with larger $\Gamma$ and $G$ values models a relatively more biased or less symmetric data distribution, in which case both $\Gamma$ and the variance of the gradient at each device are expected to be larger.

The convergence results for different $K$ values, $K \in \{ 1, 3, 5, 10, 15, 20 \}$, when $M=100$ are simulated for $(\Gamma, G)=(1,1)$ and $(\Gamma, G)=(10,10)$ in Fig. \ref{Fig_Convergence_G1} and Fig. \ref{Fig_Convergence_G100}, respectively. We assume $\sigma^2 = \bar{P} = 1$, $n = 10^5$, $\mu = 1$, $L=5$ and $\left\| \boldsymbol{\theta}(0) - \boldsymbol{\theta}^* \right\|_2^2 = 500$. We set $\tau = 3$ and $\eta (t) = \frac{1000}{\mu \tau (t + 1000)}$, $\forall t$. We observe in Fig. \ref{Fig_Convergence_G1} that, for smaller $\Gamma$ and $G$ values, in which the data distribution across the devices is expected to be more symmetric and less biased, scheduling a single device, i.e., $K=1$, provides the best performance. On the other hand, the results in Fig. \ref{Fig_Convergence_G100} illustrate that, for larger $\Gamma$ and $G$ values, i.e., more biased data distribution, more devices should be scheduled to achieve the best average loss. As can be seen in this figure, $K=1$ provides the best convergence speed, slightly faster than $K=3$. However, $K=1$ has a much higher average loss than $K=3$, $K=5$ and $K=10$. Also, although $K=3$ provides a slightly faster convergence speed, $K=5$ achieves a smaller average loss. 
The convergence results demonstrated in Fig. \ref{Fig_Convergence} corroborate the experimental results shown in Figures \ref{Fig_IID_K_1_5} and \ref{Fig_nonIID2}; that is, when the data is distributed in a more symmetric fashion, such as in the IID scenario, all the resources should be given to a single device, whereas for a more biased data distribution, such as the non-IID scenario, a fraction of devices should share the resources in order to achieve the best performance.              

\begin{figure}[t!]
\centering
\begin{subfigure}{.5\textwidth}
  \centering
  \includegraphics[scale=0.55,trim={8pt 7pt 36pt 40pt},clip]{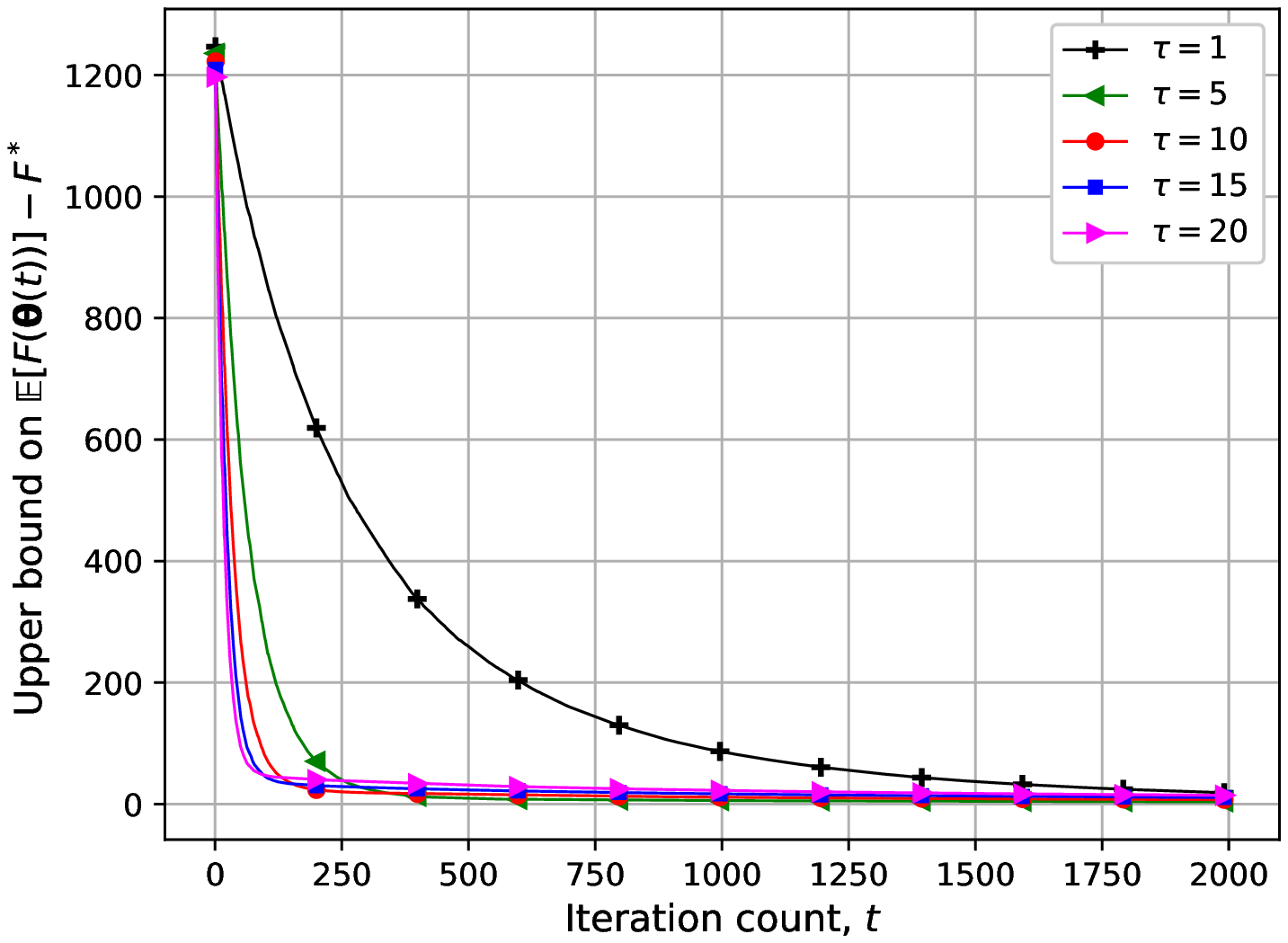}\vspace{0cm}
  \caption{$\Gamma = 1$, $G=1$ and $K=1$}
  \label{Fig_Convergence_Diff_tau_G1}
\end{subfigure}%
\begin{subfigure}{.5\textwidth}
  \centering
  \includegraphics[scale=0.55,trim={8pt 7pt 36pt 40pt},clip]{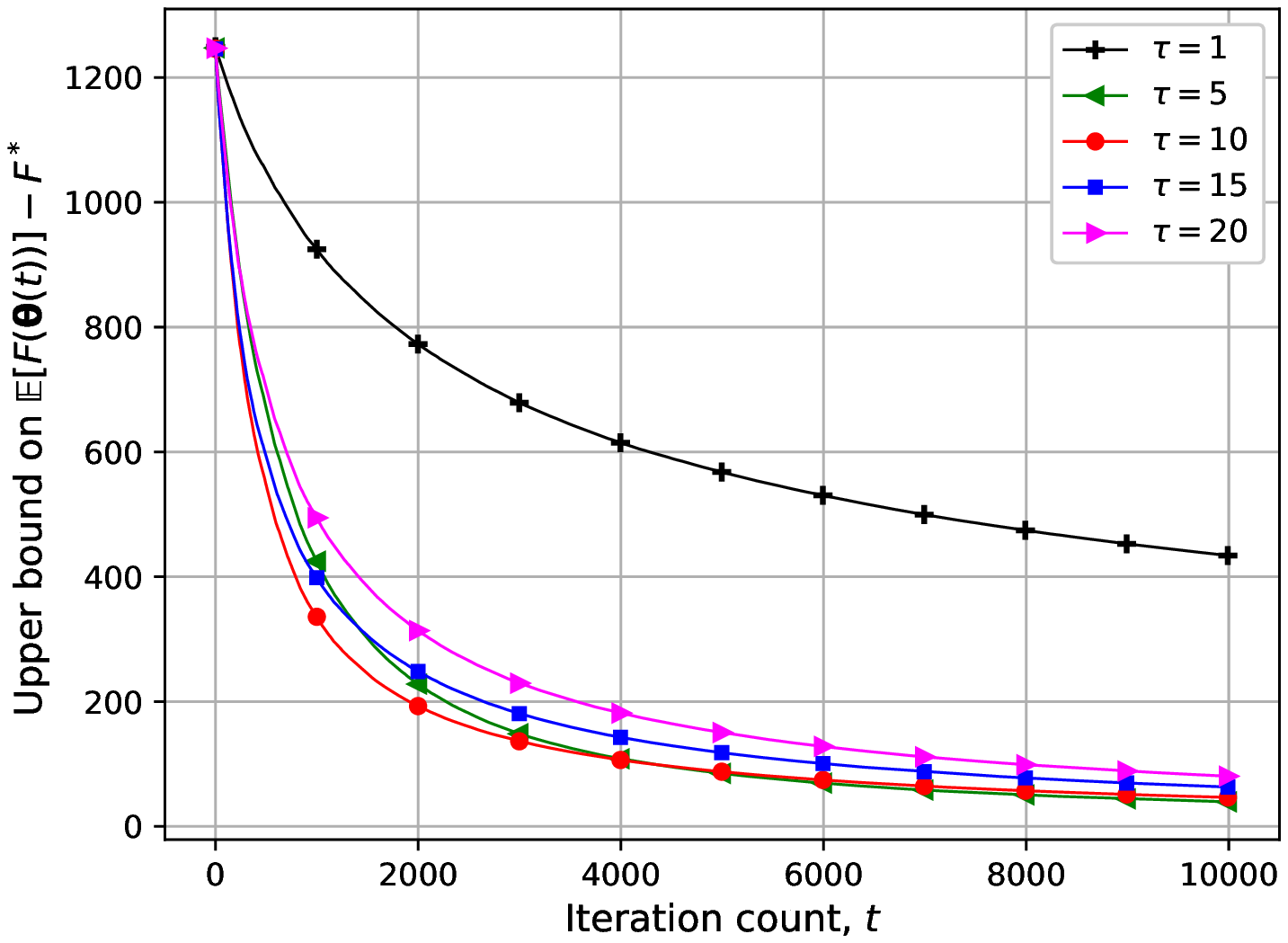}\vspace{0cm}
  \caption{$\Gamma = 10$, $G=10$ and $K=5$}
  \label{Fig_Convergence_Diff_tau_G100}
\end{subfigure}
\caption{Upper bound on $\mathbb{E} \left[ F( \boldsymbol{\theta} (T)) \right] - F^*$ for different $\tau$ values, $\tau \in \{ 1, 5, 10, 15, 20\}$, considering $M=100$, $n = 10^5$, $\mu = 0.25$, $L=5$, $\left\| \boldsymbol{\theta}(0) - \boldsymbol{\theta}^* \right\|_2^2 = 500$ and $\eta (t) = \frac{1000}{20 \mu (t + 1000)}$.}
\label{Fig_Convergence_Diff_tau}
\end{figure}

In Fig. \ref{Fig_Convergence_Diff_tau} we investigate the impact of the number of local iterations $\tau$ on the convergence rate. We again consider two scenarios $(\Gamma, G) = (1, 1)$ and $(\Gamma, G) = (10, 10)$, where the former models a more symmetric data distribution while the latter corresponds to a more biased data distribution, and, based on the observations made in Fig. \ref{Fig_Convergence}, we set $K=1$ and $K=5$, respectively. We consider $\tau \in \{ 2, 3, 5, 10, 15, 20\}$, and for fairness we set $\eta(t) = \frac{1000}{20 \mu (t + 1000)}$ in all the simulations, which satisfies $\eta(t) \le \min \{1, \frac{1}{\mu \tau}\}$. We assume $\sigma^2 = \bar{P} = 1$, $n = 10^5$, $\mu = 0.25$, $L=5$ and $\left\| \boldsymbol{\theta}(0) - \boldsymbol{\theta}^* \right\|_2^2 = 500$. For $(\Gamma, G) = (1, 1)$, we observe that increasing the number of local iterations $\tau = 1$ to $\tau = 5$ significantly improves the convergence performance in terms of both the convergence speed and the final average loss. However, by further increasing $\tau$ the improvement in the convergence speed reduces, and the final average loss increases slightly, which is due to the relatively small asymmetry in the data distribution. This shows that, for a fixed local dataset with limited size, performing an excessive number of local iterations might not help to converge to the optimal solution faster even for a fairly symmetric data distribution. 
Similarly, for $(\Gamma, G) = (10, 10)$, the convergence speed, as well as the average loss improve significantly increasing for $\tau =5$ compared to $\tau =1$. 
However, it is evident that increasing $\tau$ to $\tau = 15$ and $\tau = 20$ deteriorates the performance in terms of both the convergence speed and the final average loss value. 
As expected, designing an efficient $\tau$ value is more crucial for a more biased data distribution, and increasing $\tau$ excessively might cause divergence, since the local datasets do not provide a reliable representation of the whole dataset.

\section{Conclusions}\label{SecConc}
We have studied FL under limited power and bandwidth resources considering block fading channels from the devices to the PS. We have considered orthogonal digital transmissions from the devices to the PS, and studied various scheduling algorithms to decide which devices participate in the learning process at each round.
There is a natural tradeoff between the number of devices participating and the fraction of resources allocated to each device. With more devices scheduled for transmission, the global model parameters are updated at the PS by utilizing a larger fraction of the training data samples; while, each device provides a less accurate estimate of its local model update due to the limited resources available per device.
We have proposed novel device scheduling algorithms that consider not only the channel conditions of the devices, but also the significance of their local model updates. 
We have also established convergence result for FL over power-and bandwidth-limited wireless networks with device scheduling, which, to the best of our knowledge, provides the first convergence result in the literature for the setting under consideration. 
Experiments on the MNIST dataset have shown that it is beneficial to schedule devices based on both their channel conditions and the significance of their model updates rather than considering only one of the two metrics. Also, the best number of participating devices for each considered policy depends on the type of data distribution across devices; for an IID scenario, it is better to schedule a single device, whereas for a non-IID scenario, scheduling a moderate number of devices provides the best performance. The observation on the best number of scheduled devices for IID and non-IID data distribution scenarios is corroborated by the convergence result, where the asymmetry in the data distribution, which corresponds to non-IID scenario, is introduced by increasing the variance of the local gradients.      

\appendices

\section{Proof of Theorem \ref{Theoremtheta_thetastarKequalM}}\label{AppProofTheoremTheta_ThetaStarKequalM}

In addition to the global model parameters updated as
\begin{align}\label{AppFDPModUpdateTheta}
\boldsymbol{\theta} (t+1) = \boldsymbol{\theta} (t) + \frac{1}{K} \sum\nolimits_{m \in \mathcal{M} (t)} \widehat{\boldsymbol{g}}_m (t),
\end{align}
we define the following auxiliary variable:
\begin{align}\label{AppFDPThetaTilde}
\boldsymbol{\upsilon} (t+1) = \boldsymbol{\theta} (t) + \frac{1}{M} \sum\nolimits_{m =1}^{M} \widehat{\boldsymbol{g}}_m (t).
\end{align}
We have 
\begin{align}\label{AppFDP_1}
& \left\| \boldsymbol{\theta} (t+1) - {\boldsymbol{\theta}}^* \right\|_2^2 =  \left\| \boldsymbol{\theta} (t+1) - {\boldsymbol{\upsilon}} (t+1) + {\boldsymbol{\upsilon}} (t+1)  - {\boldsymbol{\theta}}^* \right\|_2^2 \nonumber \\
& \; = \left\| \boldsymbol{\theta} (t+1) - {\boldsymbol{\upsilon}} (t+1) \right\|_2^2 + \left\| {\boldsymbol{\upsilon}} (t+1)  - {\boldsymbol{\theta}}^* \right\|_2^2  + 2 \langle \boldsymbol{\theta} (t+1) - {\boldsymbol{\upsilon}} (t+1) , {\boldsymbol{\upsilon}} (t+1)  - {\boldsymbol{\theta}}^* \rangle.  
\end{align}
In the following, we bound the average of the terms on the right hand side (RHS) of \eqref{AppFDP_1}.

\begin{lemma}\label{AppLemmaTerm_1}
We have
\begin{align}\label{AppLemmaTerm_1_Eq_1}
\mathbb{E} \left[ \left\| \boldsymbol{\theta} (t+1) - {\boldsymbol{\upsilon}} (t+1) \right\|_2^2 \right] \le \frac{(M-K)\rho(t) \eta^2(t) \tau^2 G^2}{K(M-1)}.     
\end{align}
\end{lemma}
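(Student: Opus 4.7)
My plan is to isolate the randomness coming from device scheduling and treat the difference $\boldsymbol{\theta}(t+1) - \boldsymbol{\upsilon}(t+1)$ as the gap between a sample mean over a uniformly random $K$-subset $\mathcal{M}(t)$ and the full mean of the same $M$ vectors. Writing $\mathbf{1}_m \triangleq \mathbb{I}[m \in \mathcal{M}(t)]$, the quantity to bound is
\begin{align*}
\boldsymbol{\theta}(t+1) - \boldsymbol{\upsilon}(t+1) = \sum_{m=1}^M \left(\frac{\mathbf{1}_m}{K} - \frac{1}{M}\right)\widehat{\boldsymbol{g}}_m(t).
\end{align*}
Conditioning on the $\widehat{\boldsymbol{g}}_m(t)$, I will compute $\mathbb{E}_{\mathcal{M}}[\mathbf{1}_m] = K/M$ and $\mathbb{E}_{\mathcal{M}}[\mathbf{1}_m \mathbf{1}_{m'}] = K(K-1)/(M(M-1))$ for $m\neq m'$, yielding $\mathbb{E}_{\mathcal{M}}[(\mathbf{1}_m/K - 1/M)^2] = (M-K)/(KM^2)$ and $\mathbb{E}_{\mathcal{M}}[(\mathbf{1}_m/K - 1/M)(\mathbf{1}_{m'}/K - 1/M)] = -(M-K)/(KM^2(M-1))$.

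Expanding the squared norm and plugging in these moments, the cross terms combine into the identity
\begin{align*}
\mathbb{E}_{\mathcal{M}}\!\left[\big\|\boldsymbol{\theta}(t+1) - \boldsymbol{\upsilon}(t+1)\big\|_2^2\right] = \frac{M-K}{K(M-1)}\left(\frac{1}{M}\sum_{m=1}^{M}\|\widehat{\boldsymbol{g}}_m(t)\|_2^2 - \Big\|\frac{1}{M}\sum_{m=1}^M \widehat{\boldsymbol{g}}_m(t)\Big\|_2^2\right),
\end{align*}
which is the standard finite-population-correction formula for sampling without replacement. Dropping the nonnegative subtracted term gives the upper bound $\tfrac{M-K}{K(M-1)}\cdot\tfrac{1}{M}\sum_m \|\widehat{\boldsymbol{g}}_m(t)\|_2^2$.

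Next I will take expectations with respect to the sparsification and stochastic-gradient noise. Lemma \ref{LemmaMeanVarhatg} converts $\mathbb{E}_S[\|\widehat{\boldsymbol{g}}_m(t)\|_2^2]$ to $\rho(t)\|\boldsymbol{g}_m(t)\|_2^2$. Using the definition \eqref{ConvgmtDefinition}, Jensen's inequality (equivalently, Cauchy–Schwarz on the sum of $\tau$ vectors) gives
\begin{align*}
\|\boldsymbol{g}_m(t)\|_2^2 \le \eta^2(t)\,\tau\sum_{i=1}^{\tau}\big\|\nabla F_m(\boldsymbol{\theta}_m^i(t),\xi_m^i(t))\big\|_2^2,
\end{align*}
and Assumption \ref{AssumpBoundedVarGradient} then bounds each expected summand by $G^2$, producing $\mathbb{E}[\|\boldsymbol{g}_m(t)\|_2^2]\le \eta^2(t)\tau^2 G^2$ uniformly in $m$. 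Substituting yields exactly \eqref{AppLemmaTerm_1_Eq_1}.

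The only subtle step is the covariance computation in the second paragraph, which requires the uniformity of the scheduling distribution; once that is in place everything else is a direct chain of known bounds. I expect no real obstacle, but care must be taken to justify dropping the $\|\tfrac{1}{M}\sum_m \widehat{\boldsymbol{g}}_m(t)\|_2^2$ term (it is nonnegative, so dropping it only weakens the inequality in the correct direction) and to order the expectations so that the conditioning is valid at each stage.
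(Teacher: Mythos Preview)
Your proposal is correct and follows essentially the same route as the paper: both compute the sampling-without-replacement variance of $\{\widehat{\boldsymbol{g}}_m(t)\}$ (the paper centers the vectors at $\widehat{\boldsymbol{g}}(t)$ before expanding, whereas you center the indicator weights, but the resulting identity is identical), then drop the nonnegative $\|\tfrac{1}{M}\sum_m \widehat{\boldsymbol{g}}_m(t)\|_2^2$ term, apply Lemma~\ref{LemmaMeanVarhatg}, the convexity/Cauchy--Schwarz bound on the $\tau$-term sum, and Assumption~\ref{AssumpBoundedVarGradient}.
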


\begin{proof}
See Appendix \ref{AppProofPDPLemmaTerm_1}. 
\end{proof}

\begin{lemma}\label{AppFDPLemmaTerm_2}
We have
\begin{align}\label{AppFDPBoundTerm_2}
& \mathbb{E} \left[ \left\| \boldsymbol{\upsilon} (t+1) - {\boldsymbol{\theta}}^* \right\|_2^2 \right] \le \left( 1 - \mu \rho(t) \eta (t) \left( \tau - \eta(t) (\tau - 1) \right) \right) \mathbb{E} \left[ \left\| \boldsymbol{\theta} (t) - {\boldsymbol{\theta}}^* \right\|_2^2 \right] \nonumber\\
& + \rho(t) \left( 1+ \mu (1- \eta(t)) \right) \eta^2(t) G^2 \frac{\tau (\tau-1)(2\tau-1)}{6} + \rho(t) \eta^2(t) (\tau^2 + \tau-1) G^2  + 2 \rho(t) \eta(t) (\tau - 1) \Gamma \nonumber\\
& + 2 \rho(t) \eta (t) \frac{1}{M} \sum\nolimits_{m=1}^{M} \sum\nolimits_{i=2}^{\tau} \left( F_m^* - \mathbb{E} \left[ F_m({\boldsymbol{\theta}}_m^i (t)) \right] \right) + 2 \rho(t) \eta (t) \left(F^* - \mathbb{E} \left[ F({\boldsymbol{\theta}} (t)) \right] \right)  .      
\end{align}
\end{lemma}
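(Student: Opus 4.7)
The plan is to begin by writing $\boldsymbol{\upsilon}(t+1) - \boldsymbol{\theta}^* = (\boldsymbol{\theta}(t) - \boldsymbol{\theta}^*) + \tfrac{1}{M}\sum_m \widehat{\boldsymbol{g}}_m(t)$, expanding the squared norm, and taking the sparsification expectation first. By Lemma~\ref{LemmaMeanVarhatg}, the linear cross term picks up a factor $\rho(t)$ and replaces $\widehat{\boldsymbol{g}}_m$ with $\boldsymbol{g}_m$, while Jensen's inequality $\|\tfrac{1}{M}\sum_m \widehat{\boldsymbol{g}}_m\|_2^2 \le \tfrac{1}{M}\sum_m \|\widehat{\boldsymbol{g}}_m\|_2^2$ combined with \eqref{Varhatg} yields $\mathbb{E}_S[\|\tfrac{1}{M}\sum_m \widehat{\boldsymbol{g}}_m\|_2^2] \le \tfrac{\rho(t)}{M}\sum_m \|\boldsymbol{g}_m(t)\|_2^2$. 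Unfolding $\boldsymbol{g}_m(t) = -\eta(t)\sum_{i=1}^\tau \nabla F_m(\boldsymbol{\theta}_m^i(t),\xi_m^i(t))$, this norm-squared piece is bounded by the standard $\|\sum_i z_i\|_2^2 \le \tau\sum_i \|z_i\|_2^2$ together with Assumption~\ref{AssumpBoundedVarGradient}, producing $\rho(t)\eta^2(t)\tau^2 G^2$; this accounts for the $\tau^2$ part of the $(\tau^2+\tau-1)G^2$ summand.

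For the cross term, after taking the stochastic-gradient expectation it becomes $-2\rho(t)\eta(t)\tfrac{1}{M}\sum_m\sum_i \langle \boldsymbol{\theta}(t)-\boldsymbol{\theta}^*, \nabla F_m(\boldsymbol{\theta}_m^i(t))\rangle$. For each $(m,i)$, I would split $\boldsymbol{\theta}(t)-\boldsymbol{\theta}^* = (\boldsymbol{\theta}(t)-\boldsymbol{\theta}_m^i(t)) + (\boldsymbol{\theta}_m^i(t)-\boldsymbol{\theta}^*)$. On the $(\boldsymbol{\theta}_m^i-\boldsymbol{\theta}^*)$ piece I apply Assumption~\ref{AssumpStrongConvexLoss} to obtain the lower bound $F_m(\boldsymbol{\theta}_m^i) - F_m(\boldsymbol{\theta}^*) + (\mu/2)\|\boldsymbol{\theta}_m^i - \boldsymbol{\theta}^*\|_2^2$. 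On the $(\boldsymbol{\theta}(t)-\boldsymbol{\theta}_m^i)$ piece (which vanishes at $i=1$) I apply the AM-GM bound $-2\eta(t)\langle a,b\rangle \le \|a\|_2^2 + \eta^2(t)\|b\|_2^2$; combined with the bound $\mathbb{E}[\|\boldsymbol{\theta}(t)-\boldsymbol{\theta}_m^i(t)\|_2^2] \le \eta^2(t)(i-1)^2 G^2$ (obtained by unrolling the $\tau$-step SGD recursion and invoking $\|\sum z_j\|_2^2 \le (i-1)\sum\|z_j\|_2^2$ plus the gradient bound), and $\mathbb{E}\|\nabla F_m(\boldsymbol{\theta}_m^i)\|_2^2 \le G^2$, summation over $i=2,\ldots,\tau$ uses $\sum_{i=2}^\tau (i-1)^2 = \tfrac{\tau(\tau-1)(2\tau-1)}{6}$ and $\sum_{i=2}^\tau 1 = \tau-1$ to produce the $\rho(t)\eta^2(t)\tfrac{\tau(\tau-1)(2\tau-1)}{6}G^2$ and $\rho(t)\eta^2(t)(\tau-1)G^2$ contributions, the latter combining with the $\rho(t)\eta^2(t)\tau^2 G^2$ above to give the full $\rho(t)\eta^2(t)(\tau^2+\tau-1)G^2$ term.

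The main obstacle is converting the residual $-\mu\rho(t)\eta(t)\tfrac{1}{M}\sum_{m,i}\|\boldsymbol{\theta}_m^i-\boldsymbol{\theta}^*\|_2^2$ from the strong-convexity step into the precise target coefficient $(1 - \mu\rho(t)\eta(t)(\tau - \eta(t)(\tau-1)))$ on $\|\boldsymbol{\theta}(t)-\boldsymbol{\theta}^*\|_2^2$. For $i=1$ this is immediate because $\boldsymbol{\theta}_m^1(t) = \boldsymbol{\theta}(t)$. For $i\ge 2$ I would expand $\|\boldsymbol{\theta}_m^i - \boldsymbol{\theta}^*\|_2^2 = \|\boldsymbol{\theta}_m^i - \boldsymbol{\theta}(t)\|_2^2 + \|\boldsymbol{\theta}(t) - \boldsymbol{\theta}^*\|_2^2 + 2\langle \boldsymbol{\theta}_m^i - \boldsymbol{\theta}(t), \boldsymbol{\theta}(t) - \boldsymbol{\theta}^*\rangle$ and apply the carefully calibrated AM-GM $2\langle a,b\rangle \ge -\tfrac{1}{\eta(t)}\|a\|_2^2 - \eta(t)\|b\|_2^2$, yielding $\|\boldsymbol{\theta}_m^i - \boldsymbol{\theta}^*\|_2^2 \ge (1-\eta(t))\|\boldsymbol{\theta}(t)-\boldsymbol{\theta}^*\|_2^2 + (1 - 1/\eta(t))\|\boldsymbol{\theta}_m^i - \boldsymbol{\theta}(t)\|_2^2$. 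Multiplying by $-\mu\rho(t)\eta(t)$ (and using $\eta(t)\le 1$) produces (i) a $-\mu\rho(t)\eta(t)(1-\eta(t))\|\boldsymbol{\theta}(t)-\boldsymbol{\theta}^*\|_2^2$ coefficient, which, summed over $i=2,\ldots,\tau$ and combined with the $i=1$ contribution $-\mu\rho(t)\eta(t)\|\boldsymbol{\theta}(t)-\boldsymbol{\theta}^*\|_2^2$, telescopes to exactly $-\mu\rho(t)\eta(t)(\tau - \eta(t)(\tau-1))\|\boldsymbol{\theta}(t)-\boldsymbol{\theta}^*\|_2^2$, and (ii) a residual $+\mu\rho(t)(1-\eta(t))\|\boldsymbol{\theta}_m^i - \boldsymbol{\theta}(t)\|_2^2$ which, after expectation and summation, contributes the $\rho(t)\mu(1-\eta(t))\eta^2(t)\tfrac{\tau(\tau-1)(2\tau-1)}{6}G^2$ piece that combines with the earlier $G^2$ term of the same shape to assemble the full $\rho(t)(1+\mu(1-\eta(t)))\eta^2(t)\tfrac{\tau(\tau-1)(2\tau-1)}{6}G^2$ coefficient. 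The specific choice $\alpha=\eta(t)$ in the AM-GM is the crucial calibration; any other weight fails to match the target.

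Finally, to surface $\Gamma$ I would decompose the $F_m(\boldsymbol{\theta}^*) - F_m(\boldsymbol{\theta}_m^i)$ terms arising from strong convexity (for $i\ge 2$) as $(F_m^* - F_m(\boldsymbol{\theta}_m^i)) + (F_m(\boldsymbol{\theta}^*) - F_m^*)$; averaging the second summand over $m$ produces $F^* - \tfrac{1}{M}\sum_m F_m^* = \Gamma$, which, multiplied by $2\rho(t)\eta(t)$ and summed over the $\tau-1$ nontrivial indices, yields the $2\rho(t)\eta(t)(\tau-1)\Gamma$ term. The $i=1$ contribution directly produces $2\rho(t)\eta(t)(F^* - F(\boldsymbol{\theta}(t)))$, and the leftover $2\rho(t)\eta(t)\tfrac{1}{M}\sum_m\sum_{i=2}^\tau (F_m^* - F_m(\boldsymbol{\theta}_m^i))$ is kept as is to match the last two terms of the stated bound. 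Collecting every contribution and taking the outer expectation (over stochastic gradients and sparsity) completes the proof, with the hypothesis $\eta(t)\le \min\{1,1/(\mu\tau)\}$ guaranteeing that the coefficient of $\|\boldsymbol{\theta}(t)-\boldsymbol{\theta}^*\|_2^2$ remains nonnegative.
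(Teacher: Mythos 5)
Your proposal is correct and follows essentially the same route as the paper: the same expansion of $\left\| \boldsymbol{\upsilon}(t+1)-\boldsymbol{\theta}^* \right\|_2^2$, Lemma \ref{LemmaMeanVarhatg} plus Jensen for the quadratic term, the split of the cross term into the $i=1$ and $i\ge 2$ contributions with the two $\eta(t)$-weighted AM--GM steps, strong convexity to surface $\Gamma$, and the drift bound $\sum_{i=2}^{\tau}(i-1)^2=\tau(\tau-1)(2\tau-1)/6$. The only (immaterial) difference is that you pass to full gradients in the entire cross term before bounding, whereas the paper keeps stochastic gradients in the drift piece; both are valid under Assumption \ref{AssumpBoundedVarGradient}.
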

\begin{proof}
See Appendix \ref{AppProofFDPLemmaTerm_2}. 
\end{proof}

\begin{lemma}\label{AppPDPLemmaTerm_3}
Let $\mathbb{E}_{\mathcal{M}(t)}$ denote expectation over the device scheduling randomness at the global iteration $t$. We have
\begin{align}\label{AppPDPBoundTerm_3_1}
\mathbb{E}_{\mathcal{M}(t)} \left[ \boldsymbol{\theta} (t+1) \right] = {\boldsymbol{\upsilon}} (t+1),   
\end{align}
from which it follows that
\begin{align}\label{AppPDP_2}
\mathbb{E}_{\mathcal{M}(t)} \left[ \langle \boldsymbol{\theta} (t+1) - {\boldsymbol{\upsilon}} (t+1) , {\boldsymbol{\upsilon}} (t+1)  - {\boldsymbol{\theta}}^* \rangle \right] = 0.   
\end{align}
\end{lemma}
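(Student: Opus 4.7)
The plan is to establish the first identity directly from the uniform scheduling assumption, and then deduce the second identity as an immediate consequence by linearity of expectation.

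First I would fix everything that is determined prior to scheduling: the previous iterate $\boldsymbol{\theta}(t)$ and the sparsified local updates $\widehat{\boldsymbol{g}}_1(t),\dots,\widehat{\boldsymbol{g}}_M(t)$ are all computed before the PS selects $\mathcal{M}(t)$, so they are deterministic under $\mathbb{E}_{\mathcal{M}(t)}$. By the uniform-scheduling hypothesis stated just before the theorem, $\Pr\{m\in\mathcal{M}(t)\}=K/M$ for every $m\in[M]$. Using the identity
\begin{equation*}
\sum_{m\in\mathcal{M}(t)} \widehat{\boldsymbol{g}}_m(t) \;=\; \sum_{m=1}^{M}\mathds{1}\{m\in\mathcal{M}(t)\}\,\widehat{\boldsymbol{g}}_m(t),
\end{equation*}
linearity of expectation yields
\begin{equation*}
\mathbb{E}_{\mathcal{M}(t)}\!\left[\frac{1}{K}\sum_{m\in\mathcal{M}(t)}\widehat{\boldsymbol{g}}_m(t)\right] = \frac{1}{K}\sum_{m=1}^{M}\Pr\{m\in\mathcal{M}(t)\}\,\widehat{\boldsymbol{g}}_m(t) = \frac{1}{M}\sum_{m=1}^{M}\widehat{\boldsymbol{g}}_m(t).
\end{equation*}
Adding $\boldsymbol{\theta}(t)$ to both sides and recalling \eqref{AppFDPModUpdateTheta}--\eqref{AppFDPThetaTilde} gives $\mathbb{E}_{\mathcal{M}(t)}[\boldsymbol{\theta}(t+1)]=\boldsymbol{\upsilon}(t+1)$, which is \eqref{AppPDPBoundTerm_3_1}.

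For \eqref{AppPDP_2}, I note that both $\boldsymbol{\upsilon}(t+1)$ and $\boldsymbol{\theta}^*$ are non-random with respect to the scheduling randomness: the auxiliary iterate $\boldsymbol{\upsilon}(t+1)$ is a function only of $\boldsymbol{\theta}(t)$ and the sparsified gradients, none of which depend on $\mathcal{M}(t)$, while $\boldsymbol{\theta}^*$ is a deterministic quantity. Therefore the vector $\boldsymbol{\upsilon}(t+1)-\boldsymbol{\theta}^*$ can be pulled out of $\mathbb{E}_{\mathcal{M}(t)}[\cdot]$, and the inner product is linear in its first argument, so
\begin{equation*}
\mathbb{E}_{\mathcal{M}(t)}\!\left[\langle \boldsymbol{\theta}(t+1)-\boldsymbol{\upsilon}(t+1),\,\boldsymbol{\upsilon}(t+1)-\boldsymbol{\theta}^*\rangle\right] = \langle \mathbb{E}_{\mathcal{M}(t)}[\boldsymbol{\theta}(t+1)]-\boldsymbol{\upsilon}(t+1),\,\boldsymbol{\upsilon}(t+1)-\boldsymbol{\theta}^*\rangle = 0,
\end{equation*}
where the last equality uses the identity just established.

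There is essentially no obstacle here: the content of the lemma is really just the observation that uniform device selection produces an unbiased estimate of the full-participation update, so the cross-term in the expansion \eqref{AppFDP_1} vanishes when averaged over scheduling. The only subtle point worth flagging explicitly in the write-up is the measurability remark, namely that the conditioning in $\mathbb{E}_{\mathcal{M}(t)}$ treats the gradients $\widehat{\boldsymbol{g}}_m(t)$ as fixed, which is consistent with the tower-property decomposition $\mathbb{E}[\cdot]=\mathbb{E}_{\xi,S}\mathbb{E}_{\mathcal{M}(t)}[\cdot]$ used implicitly elsewhere in the proof of Theorem~\ref{Theoremtheta_thetastarKequalM}.
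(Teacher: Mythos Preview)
Your proof is correct and follows essentially the same approach as the paper: both establish \eqref{AppPDPBoundTerm_3_1} by using that each device is selected with marginal probability $K/M$ (the paper phrases this combinatorially via the count $\binom{M-1}{K-1}/\binom{M}{K}$, you phrase it via indicator functions and linearity), and both then deduce \eqref{AppPDP_2} by pulling the $\mathcal{M}(t)$-measurable second argument out of the expectation.
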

\begin{proof}
Due to the randomness of the device scheduling policy, it follows that
\newcommand\thirdequal{\mathrel{\overset{\makebox[0pt]{\mbox{\normalfont\tiny\sffamily (a)}}}{=}}}
\begin{align}\label{AppPDPBoundTerm_3_2}
\mathbb{E}_{\mathcal{M}(t)} \left[ \frac{1}{K} \sum\nolimits_{m \in \mathcal{M} (t)} \widehat{\boldsymbol{g}}_m (t) \right] \thirdequal \frac{\binom{M-1}{K-1}}{K\binom{M}{K}} \sum\nolimits_{m=1}^{M} \widehat{\boldsymbol{g}}_m (t) = \frac{1}{M} \sum\nolimits_{m=1}^{M} \widehat{\boldsymbol{g}}_m (t),       
\end{align}
where (a) follows from device scheduling randomness and the fact that the quantized model update of each device appears $\binom{M-1}{K-1}$ times. The proof of Lemma \ref{AppPDPLemmaTerm_3} is concluded from \eqref{AppPDPBoundTerm_3_2}. 
\end{proof}

According to the results in Lemmas \ref{AppLemmaTerm_1}, \ref{AppFDPLemmaTerm_2} and \ref{AppPDPLemmaTerm_3}, it follows that 
\newcommand\firstinequal{\mathrel{\overset{\makebox[0pt]{\mbox{\normalfont\tiny\sffamily (a)}}}{\le}}}
\newcommand\secondinequal{\mathrel{\overset{\makebox[0pt]{\mbox{\normalfont\tiny\sffamily (b)}}}{\le}}}
\begin{align}\label{AppFDP_2}
&\mathbb{E} \left[ \left\| \boldsymbol{\theta} (t+1) - {\boldsymbol{\theta}}^* \right\|_2^2 \right] \le \left( 1 - \mu \rho(t) \eta (t) \left( \tau - \eta(t) (\tau - 1) \right) \right) \mathbb{E} \left[ \left\| \boldsymbol{\theta} (t) - {\boldsymbol{\theta}}^* \right\|_2^2 \right] \nonumber\\
& \;\; \qquad \quad + \frac{(M-K)\rho(t) \eta^2(t) \tau^2 G^2}{K(M-1)} + \rho(t) \eta^2(t) \left( \tau^2 + \tau - 1 \right) G^2  \nonumber\\
& \;\; \qquad \quad + \rho(t) \left( 1+ \mu (1- \eta(t)) \right) \eta^2(t) G^2 \frac{\tau (\tau-1)(2\tau-1)}{6}  + 2 \rho(t) \eta(t) (\tau - 1) \Gamma \nonumber\\
& \;\; \qquad \quad + 2 \rho(t) \eta (t) \frac{1}{M} \sum\nolimits_{m=1}^{M} \sum\nolimits_{i=2}^{\tau} \left( F_m^* - \mathbb{E} \left[ F_m({\boldsymbol{\theta}}_m^i (t)) \right] \right) + 2 \rho(t) \eta (t) \left(F^* - \mathbb{E} \left[ F({\boldsymbol{\theta}} (t)) \right] \right)  \nonumber\\
& \qquad \; \firstinequal \left( 1 - \mu \rho(t) \eta (t) \left( \tau - \eta(t) (\tau - 1) \right) \right) \mathbb{E} \left[ \left\| \boldsymbol{\theta} (t) - {\boldsymbol{\theta}}^* \right\|_2^2 \right] \nonumber\\
& \;\;  \qquad \quad + \frac{(M-K)\rho(t) \eta^2(t) \tau^2 G^2}{K(M-1)} + \rho(t) \eta^2(t) \left( \tau^2 + \tau -1 \right) G^2  \nonumber\\
& \;\;  \qquad \quad + \rho(t) \left( 1+ \mu (1- \eta(t)) \right) \eta^2(t) G^2 \frac{\tau (\tau-1)(2\tau-1)}{6} + 2 \rho(t) \eta(t) (\tau - 1) \Gamma,
\end{align}
where (a) follows since $F^* - F(\boldsymbol{\theta} (t)) \le 0$, $\forall t$, and $F_m^* - F_m({\boldsymbol{\theta}}_m^i (t)) \le 0$, $\forall m, t$. Theorem \ref{Theoremtheta_thetastarKequalM} is concluded from the inequality in \eqref{AppFDP_2}. 


\section{Proof of Lemma \ref{AppLemmaTerm_1}}\label{AppProofPDPLemmaTerm_1}
To prove Lemma \ref{AppLemmaTerm_1}, we take similar steps as \cite[Appendix B.4]{XLiFedAveFLnonIID}. We have 
\begin{align}\label{AppPDPLemmaTemr_1_Eq_1}
\mathbb{E} \left[ \left\| \boldsymbol{\theta} (t+1) - {\boldsymbol{\upsilon}} (t+1) \right\|_2^2 \right] =  \mathbb{E} \bigg[ \left\| \frac{1}{K} \sum\nolimits_{m \in \mathcal{M} (t)} \widehat{\boldsymbol{g}}_m (t) - \widehat{\boldsymbol{g}} (t) \right\|_2^2 \bigg],   
\end{align}
where we have defined 
\begin{align}
\widehat{\boldsymbol{g}} (t) \triangleq \frac{1}{M} \sum\nolimits_{m =1}^{M} \widehat{\boldsymbol{g}}_m (t).    
\end{align}
We also denote the indicator function by $\mathds{1} (\cdot)$. We have
\newcommand\thirdequal{\mathrel{\overset{\makebox[0pt]{\mbox{\normalfont\tiny\sffamily (a)}}}{=}}}
\begin{align}\label{AppPDPLemmaTemr_1_Eq_2}
&\mathbb{E} \left[ \left\| \boldsymbol{\theta} (t+1) - {\boldsymbol{\upsilon}} (t+1) \right\|_2^2 \right]  = \mathbb{E} \bigg[ \left\| \frac{1}{K} \sum\nolimits_{m =1}^{M} \mathds{1} (m \in \mathcal{M} (t)) \left( \widehat{\boldsymbol{g}}_m (t) - \widehat{\boldsymbol{g}} (t) \right) \right\|_2^2 \bigg]\nonumber\\
& \; = \frac{1}{K^2} \mathbb{E} \left[ \sum\nolimits_{m =1}^{M} \mathds{1} (m \in \mathcal{M} (t)) \left\| \widehat{\boldsymbol{g}}_m (t) - \widehat{\boldsymbol{g}} (t) \right\|_2^2 \right.\nonumber\\
&\; \quad \; \left. + \sum\nolimits_{m =1}^{M} \sum\nolimits_{m' =1, m' \ne m}^{M} \mathds{1} (m \in \mathcal{M} (t)) \mathds{1} (m' \in \mathcal{M} (t)) \langle \widehat{\boldsymbol{g}}_m (t) - \widehat{\boldsymbol{g}} (t), \widehat{\boldsymbol{g}}_{m'} (t) - \widehat{\boldsymbol{g}} (t) \rangle \right].   
\end{align}
Due to the symmetry, it follows that 
\begin{align}\label{AppPDPLemmaTemr_1_Eq_3}
\mathbb{E}_{\mathcal{M}(t)} \bigg[ \sum\limits_{m =1}^{M} \mathds{1} (m \in \mathcal{M} (t)) \left\| \widehat{\boldsymbol{g}}_m (t) - \widehat{\boldsymbol{g}} (t) \right\|_2^2 \bigg] &\thirdequal \frac{\binom{M-1}{K-1}}{\binom{M}{K}} \sum\limits_{m =1}^{M} \left\| \widehat{\boldsymbol{g}}_m (t) - \widehat{\boldsymbol{g}} (t) \right\|_2^2 \nonumber\\
& = \frac{K}{M} \sum\limits_{m =1}^{M} \left\| \widehat{\boldsymbol{g}}_m (t) - \widehat{\boldsymbol{g}} (t) \right\|_2^2,   
\end{align}
where (a) is due to the fact that each index $m$, for $m \in [M]$, appears $\binom{M-1}{K-1}$ times with the term on the left hand side (LHS) of the equality, and 
\newcommand\fifthequal{\mathrel{\overset{\makebox[0pt]{\mbox{\normalfont\tiny\sffamily (b)}}}{=}}}
\begin{align}\label{AppPDPLemmaTemr_1_Eq_4}
& \mathbb{E}_{\mathcal{M}(t)} \bigg[ \sum\limits_{m =1}^{M} \sum\limits_{m' =1, m' \ne m}^{M} \mathds{1} (m \in \mathcal{M} (t)) \mathds{1} (m' \in \mathcal{M} (t)) \langle \widehat{\boldsymbol{g}}_m (t) - \widehat{\boldsymbol{g}} (t), \widehat{\boldsymbol{g}}_{m'} (t) - \widehat{\boldsymbol{g}} (t) \rangle \bigg] \nonumber\\
& \qquad \qquad \qquad \qquad \qquad \fifthequal \frac{\binom{M-2}{K-2}}{\binom{M}{K}} \sum\limits_{m =1}^{M} \sum\limits_{m' =1, m' \ne m}^{M} \langle \widehat{\boldsymbol{g}}_m (t) - \widehat{\boldsymbol{g}} (t), \widehat{\boldsymbol{g}}_{m'} (t) - \widehat{\boldsymbol{g}} (t) \rangle \nonumber\\
& \qquad \qquad \qquad \qquad \qquad = \frac{K(K-1)}{M(M-1)} \sum\limits_{m =1}^{M} \sum\limits_{m' =1, m' \ne m}^{M} \langle \widehat{\boldsymbol{g}}_m (t) - \widehat{\boldsymbol{g}} (t), \widehat{\boldsymbol{g}}_{m'} (t) - \widehat{\boldsymbol{g}} (t) \rangle,   
\end{align}
where (b) follows since each specific index pair $(m, m')$, for $m, m' \in [M]$, $m \ne m'$, appears $\binom{M-2}{K-2}$ times on the LHS of the equality. Substituting \eqref{AppPDPLemmaTemr_1_Eq_3} and \eqref{AppPDPLemmaTemr_1_Eq_4} into \eqref{AppPDPLemmaTemr_1_Eq_2} yields
\newcommand\sixthequal{\mathrel{\overset{\makebox[0pt]{\mbox{\normalfont\tiny\sffamily (c)}}}{=}}}
\newcommand\seventhequal{\mathrel{\overset{\makebox[0pt]{\mbox{\normalfont\tiny\sffamily (d)}}}{=}}}
\newcommand\sixthinequal{\mathrel{\overset{\makebox[0pt]{\mbox{\normalfont\tiny\sffamily (e)}}}{\le}}}
\newcommand\seventhinequal{\mathrel{\overset{\makebox[0pt]{\mbox{\normalfont\tiny\sffamily (f)}}}{\le}}}
\begin{align}\label{AppPDPLemmaTemr_1_Eq_5}
& \mathbb{E} \left[ \left\| \boldsymbol{\theta} (t+1) - {\boldsymbol{\upsilon}} (t+1) \right\|_2^2 \right] = \frac{1}{KM}  \sum\nolimits_{m =1}^{M} \mathbb{E} \left[ \left\| \widehat{\boldsymbol{g}}_m (t) - \widehat{\boldsymbol{g}} (t) \right\|_2^2 \right]  \nonumber\\
& \qquad \;\;\;\; + \frac{K-1}{KM(M-1)} \sum\nolimits_{m =1}^{M} \sum\nolimits_{m' =1, m' \ne m}^{M} \mathbb{E} \left[ \langle \widehat{\boldsymbol{g}}_m (t) - \widehat{\boldsymbol{g}} (t), \widehat{\boldsymbol{g}}_{m'} (t) - \widehat{\boldsymbol{g}} (t) \rangle \right] \nonumber\\
& \qquad \sixthequal \frac{M-K}{KM(M-1)} \sum\nolimits_{m =1}^{M} \mathbb{E} \left[ \left\| \widehat{\boldsymbol{g}}_m (t) - \widehat{\boldsymbol{g}} (t) \right\|_2^2 \right]\nonumber\\
& \qquad = \frac{M-K}{MK(M-1)} \left( \sum\nolimits_{m =1}^{M} \mathbb{E} \left[ \left\| \widehat{\boldsymbol{g}}_m (t) \right\|_2^2 \right] - \mathbb{E} \left[ \left\| \widehat{\boldsymbol{g}} (t) \right\|_2^2 \right] \right) \nonumber\\
& \qquad \le \frac{M-K}{MK(M-1)}  \sum\nolimits_{m =1}^{M} \mathbb{E} \left[ \left\| \widehat{\boldsymbol{g}}_m (t) \right\|_2^2 \right] \nonumber\\
& \qquad \seventhequal \frac{(M-K) \rho(t)}{MK(M-1)} \sum\nolimits_{m =1}^{M} \mathbb{E} \left[ \left\| {\boldsymbol{g}}_m (t) \right\|_2^2 \right] \nonumber\\
& \qquad = \frac{(M-K) \rho(t)\eta^2(t)}{MK(M-1)} \sum\nolimits_{m =1}^{M} \mathbb{E} \left[ \left\| \sum\nolimits_{i=1}^{\tau} \nabla F_m \left( \boldsymbol{\theta}_m^i (t), \xi_m^i (t) \right) \right\|_2^2 \right] \nonumber \\
& \qquad \sixthinequal \frac{(M-K) \rho(t)\eta^2(t) \tau}{MK(M-1)} \sum\nolimits_{m =1}^{M} \sum\nolimits_{i=1}^{\tau} \mathbb{E} \left[ \left\| \nabla F_m \left( \boldsymbol{\theta}_m^i (t), \xi_m^i (t) \right) \right\|_2^2 \right] \nonumber\\
& \qquad \seventhinequal \frac{(M-K) \rho(t) \eta^2(t) \tau^2 G^2}{K(M-1)},
\end{align}
where (c) follows since 
\begin{align}
\left\| \sum\nolimits_{m =1}^{M} \left( \widehat{\boldsymbol{g}}_m (t) - \widehat{\boldsymbol{g}} (t) \right) \right\|_2^2 = 0,   
\end{align}
(d) follows from Lemma \ref{LemmaMeanVarhatg}, (e) follows from the convexity of $\| \cdot \|_2^2$, and (f) follows from Assumption \ref{AssumpBoundedVarGradient}. This completes the proof of Lemma \ref{AppLemmaTerm_1}.

\section{Proof of Lemma \ref{AppFDPLemmaTerm_2}}\label{AppProofFDPLemmaTerm_2}

We have
\begin{align}\label{AppLemmaTemr_2_Eq_1}
&\mathbb{E} \left[ \left\| \boldsymbol{\upsilon} (t+1) - {\boldsymbol{\theta}}^* \right\|_2^2 \right] = \mathbb{E} \bigg[ \left\| \boldsymbol{\theta} (t) + \frac{1}{M} \sum\nolimits_{m =1}^{M} \widehat{\boldsymbol{g}}_m (t) - {\boldsymbol{\theta}}^* \right\|_2^2 \bigg] 
\end{align}

\begin{align}
= \mathbb{E} \left[ \left\| \boldsymbol{\theta} (t) - {\boldsymbol{\theta}}^* \right\|_2^2 \right] + \mathbb{E} \bigg[ \left\| \frac{1}{M} \sum\nolimits_{m =1}^{M} \widehat{\boldsymbol{g}}_m (t) \right\|_2^2 \bigg] + 2 \mathbb{E} \left[ \langle \boldsymbol{\theta} (t) - {\boldsymbol{\theta}}^* , \frac{1}{M} \sum\nolimits_{m =1}^{M} \widehat{\boldsymbol{g}}_m (t) \rangle \right].
\end{align}
The convexity of $\left\| \cdot \right\|_2^2$ results in
\newcommand\onestequal{\mathrel{\overset{\makebox[0pt]{\mbox{\normalfont\tiny\sffamily (a)}}}{=}}}
\begin{align}\label{AppLemmaTemr_2_Eq_1_2}
&\mathbb{E} \bigg[ \left\| \frac{1}{M} \sum\nolimits_{m =1}^{M} \widehat{\boldsymbol{g}}_m (t) \right\|_2^2 \bigg] \le  \frac{1}{M} \sum\nolimits_{m =1}^{M} \mathbb{E} \left[ \left\| \widehat{\boldsymbol{g}}_m (t) \right\|_2^2 \right] \onestequal \frac{\rho(t)}{M} \sum\nolimits_{m =1}^{M} \mathbb{E} \left[ \left\| {\boldsymbol{g}}_m (t) \right\|_2^2 \right] \nonumber\\
& \qquad \qquad \quad = \frac{\rho(t)\eta^2(t)}{M} \sum\nolimits_{m =1}^{M} \mathbb{E} \left[ \left\| \sum\nolimits_{i=1}^{\tau} \nabla F_m \left( \boldsymbol{\theta}_m^i (t), \xi_m^i (t) \right) \right\|_2^2 \right] \nonumber\\
& \qquad \qquad \quad  \le \frac{\rho(t) \eta^2(t) \tau}{M} \sum\nolimits_{m =1}^{M} \sum\nolimits_{i=1}^{\tau} \mathbb{E} \left[ \left\| \nabla F_m \left( \boldsymbol{\theta}_m^i (t), \xi_m^i (t) \right) \right\|_2^2 \right] \le \rho(t) \eta^2(t) \tau^2 G^2,
\end{align}
where (a) follows from \eqref{Meanhatg}. Plugging the above result into \eqref{AppLemmaTemr_2_Eq_1} yields
\begin{align}\label{AppLemmaTemr_2_Eq_1_3}
&\mathbb{E} \left[ \left\| \boldsymbol{\upsilon} (t+1) - {\boldsymbol{\theta}}^* \right\|_2^2 \right] \le \mathbb{E} \left[ \left\| \boldsymbol{\theta} (t) - {\boldsymbol{\theta}}^* \right\|_2^2 \right] + \rho(t) \eta^2(t) \tau^2 G^2 + 2 \mathbb{E} \bigg[ \langle \boldsymbol{\theta} (t) - {\boldsymbol{\theta}}^* , \frac{1}{M} \sum\limits_{m =1}^{M} \widehat{\boldsymbol{g}}_m (t) \rangle \bigg].
\end{align}
In the following, we bound the last term on the RHS of the above inequality. We have
\begin{align}\label{AppLemmaTemr_2_Eq_2}
&2 \mathbb{E} \left[ \langle \boldsymbol{\theta} (t) - {\boldsymbol{\theta}}^* , \frac{1}{M} \sum\nolimits_{m =1}^{M} \widehat{\boldsymbol{g}}_m (t) \rangle \right] \onestequal \frac{2 \rho (t)}{M} \sum\nolimits_{m=1}^{M} \mathbb{E} \left[ \langle \boldsymbol{\theta} (t) - {\boldsymbol{\theta}}^* , {\boldsymbol{g}}_m (t) \rangle \right] \nonumber\\
& \qquad = \frac{2 \rho (t) \eta(t)}{M} \sum\nolimits_{m=1}^{M} \mathbb{E} \left[ \langle {\boldsymbol{\theta}}^* - \boldsymbol{\theta} (t) , \sum\nolimits_{i=1}^{\tau} \nabla F_m \left( \boldsymbol{\theta}_m^i (t), \xi_m^i (t) \right) \rangle \right] \nonumber\\
& \qquad = \frac{2 \rho (t) \eta(t)}{M} \sum\nolimits_{m=1}^{M} \mathbb{E} \left[ \langle {\boldsymbol{\theta}}^* - \boldsymbol{\theta} (t) , \nabla F_m \left( \boldsymbol{\theta} (t), \xi_m^1 (t) \right) \rangle \right] \nonumber
\end{align}

\begin{align}
& \qquad \quad + \frac{2 \rho (t) \eta(t)}{M} \sum\nolimits_{m=1}^{M} \mathbb{E} \left[ \langle {\boldsymbol{\theta}}^* - \boldsymbol{\theta} (t) , \sum\nolimits_{i=2}^{\tau} \nabla F_m \left( \boldsymbol{\theta}_m^i (t), \xi_m^i (t) \right) \rangle \right], 
\end{align}
where (a) follows from Lemma \ref{LemmaMeanVarhatg}. 
In the following we bound the two terms on the RHS of the equality in \eqref{AppLemmaTemr_2_Eq_2}. 
We have
\newcommand\thirdinequal{\mathrel{\overset{\makebox[0pt]{\mbox{\normalfont\tiny\sffamily (b)}}}{\le}}}
\begin{align}\label{AppLemmaTemr_2_Eq_3}
& \frac{2 \rho (t) \eta(t)}{M} \sum\nolimits_{m=1}^{M} \mathbb{E} \left[ \langle {\boldsymbol{\theta}}^* - \boldsymbol{\theta} (t) , \nabla F_m \left( \boldsymbol{\theta} (t), \xi_m^1 (t) \right) \rangle \right] \nonumber\\
& \qquad \qquad \qquad \qquad \thirdequal \frac{2 \rho (t) \eta(t)}{M} \sum\nolimits_{m=1}^{M} \mathbb{E} \left[ \langle {\boldsymbol{\theta}}^* - \boldsymbol{\theta} (t) , \nabla F_m \left( \boldsymbol{\theta} (t) \right) \rangle \right] \nonumber\\
& \qquad \qquad \qquad \qquad \thirdinequal \frac{2 \rho (t) \eta(t)}{M} \sum\nolimits_{m=1}^{M} \mathbb{E} \left[ F_m (\boldsymbol{\theta}^*) - F_m(\boldsymbol{\theta} (t)) - \frac{\mu}{2} \left\| \boldsymbol{\theta} (t) - {\boldsymbol{\theta}}^* \right\|_2^2 \right] \nonumber\\
& \qquad \qquad \qquad \qquad = 2 \rho (t) \eta(t) \left( F^* - \mathbb{E} \left[ F(\boldsymbol{\theta} (t)) \right] - \frac{\mu}{2} \mathbb{E} \left[ \left\| \boldsymbol{\theta} (t) - {\boldsymbol{\theta}}^* \right\|_2^2 \right] \right),
\end{align}
where (a) follows since $\mathbb{E}_{\xi} \left[ \nabla F_m \left( \boldsymbol{\theta} (t), \xi^1_m (t) \right) \right] = \nabla F_m \left( \boldsymbol{\theta} (t)  \right)$, $\forall m, t$, and (b) follows since $F_m$ is $\mu$-strongly convex. For the second term on the RHS of the equality in \eqref{AppLemmaTemr_2_Eq_2}, we have  
\begin{align}\label{AppLemmaTemr_2_Eq_4}
& \frac{2 \rho (t) \eta(t)}{M} \sum\nolimits_{m=1}^{M} \mathbb{E} \left[ \langle {\boldsymbol{\theta}}^* - \boldsymbol{\theta} (t) , \sum\nolimits_{i=2}^{\tau} \nabla F_m \left( \boldsymbol{\theta}_m^i (t), \xi_m^i (t) \right) \rangle \right] \nonumber\\
& \qquad \qquad \qquad \qquad = \frac{2 \rho (t) \eta(t)}{M} \sum\nolimits_{m=1}^{M} \sum\nolimits_{i=2}^{\tau} \mathbb{E} \left[ \langle {\boldsymbol{\theta}}^* - \boldsymbol{\theta} (t) , \nabla F_m \left( \boldsymbol{\theta}_m^i (t), \xi_m^i (t) \right) \rangle \right] \nonumber\\
& \qquad \qquad \qquad \qquad = \frac{2 \rho (t) \eta(t)}{M} \sum\nolimits_{m=1}^{M} \sum\nolimits_{i=2}^{\tau} \mathbb{E} \left[ \langle \boldsymbol{\theta}_m^i (t) - \boldsymbol{\theta} (t) , \nabla F_m \left( \boldsymbol{\theta}_m^i (t), \xi_m^i (t) \right) \rangle \right] \nonumber\\
& \qquad \qquad \qquad \qquad \quad + \frac{2 \rho (t) \eta(t)}{M} \sum\nolimits_{m=1}^{M} \sum\nolimits_{i=2}^{\tau} \mathbb{E} \left[ \langle \boldsymbol{\theta}^* - \boldsymbol{\theta}_m^i (t) , \nabla F_m \left( \boldsymbol{\theta}_m^i (t), \xi_m^i (t) \right) \rangle \right].
\end{align}
From Cauchy-Schwarz inequality, it follows that
\begin{align}\label{AppLemmaTemr_2_Eq_5}
& \frac{2 \rho (t) \eta(t)}{M} \sum\nolimits_{m=1}^{M} \sum\nolimits_{i=2}^{\tau} \mathbb{E} \left[ \langle \boldsymbol{\theta}_m^i (t) - \boldsymbol{\theta} (t) , \nabla F_m \left( \boldsymbol{\theta}_m^i (t), \xi_m^i (t) \right) \rangle \right] \nonumber\\
& \; \; \; \quad \le \frac{\rho (t) \eta(t)}{M} \sum\nolimits_{m=1}^{M} \sum\nolimits_{i=2}^{\tau} \mathbb{E} \bigg[ \frac{1}{\eta(t)} \left\| \boldsymbol{\theta}_m^i (t) - \boldsymbol{\theta} (t) \right\|_2^2 + \eta(t) \left\| \nabla F_m \left( \boldsymbol{\theta}_m^i (t), \xi_m^i (t) \right) \right\|_2^2 \bigg]\nonumber\\
& \; \; \; \quad \firstinequal \frac{\rho (t)}{M} \sum\nolimits_{m=1}^{M} \sum\nolimits_{i=2}^{\tau} \mathbb{E} \left[ \left\| \boldsymbol{\theta}_m^i (t) - \boldsymbol{\theta} (t) \right\|_2^2 \right] + \rho (t) \eta^2 (t) \left( \tau - 1 \right) G^2, 
\end{align}
where (a) follows from Assumption \ref{AssumpBoundedVarGradient}.  
Also, the following lemma presents an upper bound on the second term in the RHS of \eqref{AppLemmaTemr_2_Eq_4}.

\begin{lemma}\label{LemmaTermE}
The second term in the RHS of \eqref{AppLemmaTemr_2_Eq_4} is upper bounded as follows:
\begin{align}\label{EQ_LemmaTermE}
& \frac{2 \rho (t) \eta(t)}{M} \sum\nolimits_{m=1}^{M} \sum\nolimits_{i=2}^{\tau} \mathbb{E} \left[ \langle \boldsymbol{\theta}^* - \boldsymbol{\theta}_m^i (t) , \nabla F_m \left( \boldsymbol{\theta}_m^i (t), \xi_m^i (t) \right) \rangle \right] \nonumber\\
& \qquad \qquad  \le - \mu \rho(t) \eta(t) (1 - \eta(t)) (\tau - 1) \mathbb{E} \left[ \left\| \boldsymbol{\theta} (t) - \boldsymbol{\theta}^* \right\|_2^2 \right] \nonumber\\
& \qquad \qquad \quad + \frac{\mu \rho(t) (1- \eta(t))}{M} \sum\nolimits_{m=1}^{M} \sum\nolimits_{i=2}^{\tau} \mathbb{E} \left[ \left\| \boldsymbol{\theta}_m^i (t) - \boldsymbol{\theta} (t) \right\|_2^2 \right] \nonumber\\
& \qquad \qquad \quad + 2 \rho(t) \eta(t) (\tau - 1) \Gamma + \frac{2 \rho(t) \eta (t)}{M} \sum\nolimits_{m=1}^{M} \sum\nolimits_{i=2}^{\tau} \left( F_m^* - \mathbb{E} \left[ F_m({\boldsymbol{\theta}}_m^i (t)) \right] \right).
\end{align}
\end{lemma}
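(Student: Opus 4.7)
The plan is to apply the tower rule, strong convexity, and a parametrized polarization identity in sequence.

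First I would use the unbiasedness assumption $\mathbb{E}_{\xi}[\nabla F_m(\boldsymbol{\theta}_m^i(t),\xi_m^i(t))] = \nabla F_m(\boldsymbol{\theta}_m^i(t))$, conditioning on $\boldsymbol{\theta}_m^i(t)$, to replace the stochastic gradient inside the inner product by the true gradient. Next, I would invoke Assumption \ref{AssumpStrongConvexLoss} ($\mu$-strong convexity of each $F_m$) in its standard rearranged form
\begin{align*}
\langle \boldsymbol{\theta}^* - \boldsymbol{\theta}_m^i(t),\nabla F_m(\boldsymbol{\theta}_m^i(t))\rangle
\le F_m(\boldsymbol{\theta}^*) - F_m(\boldsymbol{\theta}_m^i(t)) - \tfrac{\mu}{2}\left\|\boldsymbol{\theta}^* - \boldsymbol{\theta}_m^i(t)\right\|_2^2.
\end{align*}
This splits the problem into a function-value piece and a distance-squared penalty.

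For the function-value piece, I would add and subtract $F_m^*$ to write $F_m(\boldsymbol{\theta}^*) - F_m(\boldsymbol{\theta}_m^i(t)) = (F_m(\boldsymbol{\theta}^*) - F_m^*) + (F_m^* - F_m(\boldsymbol{\theta}_m^i(t)))$, then average over $m$ and recognize $\frac{1}{M}\sum_m (F_m(\boldsymbol{\theta}^*) - F_m^*) = F^* - \frac{1}{M}\sum_m F_m^* = \Gamma$ by \eqref{GammaDef}. After summing $i=2,\ldots,\tau$ and multiplying by $\frac{2\rho(t)\eta(t)}{M}$, this produces exactly the $2\rho(t)\eta(t)(\tau-1)\Gamma$ and $\frac{2\rho(t)\eta(t)}{M}\sum_m\sum_{i=2}^{\tau}(F_m^* - \mathbb{E}[F_m(\boldsymbol{\theta}_m^i(t))])$ contributions appearing in the statement.

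The main (and only slightly delicate) step is to process the penalty $-\frac{\mu}{2}\|\boldsymbol{\theta}^* - \boldsymbol{\theta}_m^i(t)\|_2^2$ in a way that produces a \emph{negative} coefficient on $\|\boldsymbol{\theta}(t)-\boldsymbol{\theta}^*\|_2^2$ together with a \emph{positive} coefficient on $\|\boldsymbol{\theta}_m^i(t)-\boldsymbol{\theta}(t)\|_2^2$ with the precise dependence on $\eta(t)$ appearing in the claim. For this I would write $\boldsymbol{\theta}^* - \boldsymbol{\theta}_m^i(t) = (\boldsymbol{\theta}^* - \boldsymbol{\theta}(t)) + (\boldsymbol{\theta}(t) - \boldsymbol{\theta}_m^i(t))$ and use the parametrized lower bound $\|a+b\|_2^2 \ge (1-\eta(t))\|a\|_2^2 + (1-\tfrac{1}{\eta(t)})\|b\|_2^2$, which follows from $2\langle a,b\rangle \ge -\eta(t)\|a\|_2^2 - \tfrac{1}{\eta(t)}\|b\|_2^2$. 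Negating gives
\begin{align*}
-\left\|\boldsymbol{\theta}^* - \boldsymbol{\theta}_m^i(t)\right\|_2^2 \le -(1-\eta(t))\left\|\boldsymbol{\theta}^* - \boldsymbol{\theta}(t)\right\|_2^2 + \left(\tfrac{1}{\eta(t)}-1\right)\left\|\boldsymbol{\theta}(t) - \boldsymbol{\theta}_m^i(t)\right\|_2^2,
\end{align*}
and multiplying by $\frac{\mu\rho(t)\eta(t)}{M}$ absorbs the $\frac{1}{\eta(t)}$ so that the positive coefficient becomes exactly $\frac{\mu\rho(t)(1-\eta(t))}{M}$, matching the target. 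Summing $i=2,\ldots,\tau$ and $m=1,\ldots,M$ produces the $(\tau-1)$ factor on the $\|\boldsymbol{\theta}(t)-\boldsymbol{\theta}^*\|_2^2$ term. Combining the four pieces and taking total expectations concludes the proof; the choice of parameter $\eta(t)$ in the polarization identity (tying it to the learning rate itself) is the one subtle bookkeeping step that makes the coefficients line up with the statement.
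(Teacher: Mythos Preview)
Your proposal is correct and follows essentially the same route as the paper's proof: tower rule to pass to the true gradient, $\mu$-strong convexity of $F_m$, the add/subtract of $F_m^*$ to expose $\Gamma$, and then the parametrized Young-type bound on $-\|\boldsymbol{\theta}^*-\boldsymbol{\theta}_m^i(t)\|_2^2$ with parameter $\eta(t)$. The paper carries out this last step by expanding the square and applying what it calls the Cauchy--Schwarz inequality to the cross term $-2\langle \boldsymbol{\theta}_m^i(t)-\boldsymbol{\theta}(t),\boldsymbol{\theta}(t)-\boldsymbol{\theta}^*\rangle$, which is exactly your parametrized polarization identity written out explicitly; the coefficients then match as you verified.
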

\begin{proof}
See Appendix \ref{AppProofLemmaTermE}. 
\end{proof}

Substituting the results in \eqref{AppLemmaTemr_2_Eq_5} and \eqref{EQ_LemmaTermE} into \eqref{AppLemmaTemr_2_Eq_4} yields
\begin{align}\label{AppLemmaTemr_2_Eq_6}
& \frac{2 \rho (t) \eta(t)}{M} \sum\nolimits_{m=1}^{M} \mathbb{E} \left[ \langle {\boldsymbol{\theta}}^* - \boldsymbol{\theta} (t) , \sum\nolimits_{i=2}^{\tau} \nabla F_m \left( \boldsymbol{\theta}_m^i (t), \xi_m^i (t) \right) \rangle \right] \nonumber\\
& \qquad \qquad  \le - \mu \rho(t) \eta(t) (1 - \eta(t)) (\tau - 1) \mathbb{E} \left[ \left\| \boldsymbol{\theta} (t) - \boldsymbol{\theta}^* \right\|_2^2 \right] \nonumber\\
& \qquad \qquad  \quad + \frac{\rho(t) \left( 1+ \mu (1- \eta(t)) \right)}{M} \sum\nolimits_{m=1}^{M} \sum\nolimits_{i=2}^{\tau} \mathbb{E} \left[ \left\| \boldsymbol{\theta}_m^i (t) - \boldsymbol{\theta} (t) \right\|_2^2 \right] + \rho (t) \eta^2 (t) \left( \tau - 1 \right) \nonumber\\
& \qquad \qquad  \quad + 2 \rho(t) \eta(t) (\tau - 1) \Gamma +  \frac{2 \rho(t) \eta (t)}{M} \sum\nolimits_{m=1}^{M} \sum\nolimits_{i=2}^{\tau} \left( F_m^* - \mathbb{E} \left[ F_m({\boldsymbol{\theta}}_m^i (t)) \right] \right).
\end{align}
We have 
\begin{align}\label{AppLemmaTemr_2_Eq_7}
&\frac{1}{M} \sum\nolimits_{m=1}^{M} \sum\nolimits_{i=2}^{\tau} \mathbb{E} \left[ \left\| \boldsymbol{\theta}_m^i (t) - \boldsymbol{\theta} (t) \right\|_2^2 \right] \nonumber\\
& = \frac{\eta^2(t)}{M} \sum\nolimits_{m=1}^{M} \sum\nolimits_{i=2}^{\tau} \mathbb{E} \left[ \left\| \sum\nolimits_{j=1}^{i} \nabla F_m \left( \boldsymbol{\theta}_m^j (t), \xi_m^j (t) \right) \right\|_2^2 \right] \firstinequal \eta^2(t) G^2 \frac{\tau (\tau-1)(2\tau-1)}{6}, 
\end{align}
where (a) follows from the convexity of $\left\| \cdot \right\|_2^2$ and Assumption \ref{AssumpBoundedVarGradient}. Having $\eta(t) \le 1$, $\forall t$, from \eqref{AppLemmaTemr_2_Eq_6} and \eqref{AppLemmaTemr_2_Eq_7}, it follows that
\begin{align}\label{AppLemmaTemr_2_Eq_8}
& \frac{2 \rho (t) \eta(t)}{M} \sum\nolimits_{m=1}^{M} \mathbb{E} \left[ \langle {\boldsymbol{\theta}}^* - \boldsymbol{\theta} (t) , \sum\nolimits_{i=2}^{\tau} \nabla F_m \left( \boldsymbol{\theta}_m^i (t), \xi_m^i (t) \right) \rangle \right] \nonumber\\
& \qquad \qquad  \le - \mu \rho(t) \eta(t) (1 - \eta(t)) (\tau - 1) \mathbb{E} \left[ \left\| \boldsymbol{\theta} (t) - \boldsymbol{\theta}^* \right\|_2^2 \right] \nonumber\\
& \qquad \qquad  \quad + \rho(t) \left( 1+ \mu (1- \eta(t)) \right) \eta^2(t) G^2 \frac{\tau (\tau-1)(2\tau-1)}{6} + \rho (t) \eta^2 (t) \left( \tau - 1 \right) G^2 \nonumber\\
& \qquad \qquad  \quad + 2 \rho(t) \eta(t) (\tau - 1) \Gamma + 2 \rho(t) \eta (t) \frac{1}{M} \sum\nolimits_{m=1}^{M} \sum\nolimits_{i=2}^{\tau} \left( F_m^* - \mathbb{E} \left[ F_m({\boldsymbol{\theta}}_m^i (t)) \right] \right).
\end{align}
By substituting the results in \eqref{AppLemmaTemr_2_Eq_3} and \eqref{AppLemmaTemr_2_Eq_8} into \eqref{AppLemmaTemr_2_Eq_2}, we obtain
\begin{align}\label{AppLemmaTemr_2_Eq_9}
& 2 \mathbb{E} \left[ \langle \boldsymbol{\theta} (t) - {\boldsymbol{\theta}}^* , \frac{1}{M} \sum\nolimits_{m =1}^{M} \widehat{\boldsymbol{g}}_m (t) \rangle \right]  \le - \mu \rho(t) \eta(t) \left( \tau - \eta(t) (\tau - 1) \right) \mathbb{E} \left[ \left\| \boldsymbol{\theta} (t) - \boldsymbol{\theta}^* \right\|_2^2 \right] \nonumber\\
& \;  + \rho(t) \left( 1+ \mu (1- \eta(t)) \right) \eta^2(t) G^2 \frac{\tau (\tau-1)(2\tau-1)}{6} + \rho (t) \eta^2 (t) \left( \tau - 1 \right)G^2 + 2 \rho(t) \eta(t) (\tau - 1) \Gamma \nonumber\\
& \;  + 2 \rho(t) \eta (t) \frac{1}{M} \sum\nolimits_{m=1}^{M} \sum\nolimits_{i=2}^{\tau} \left( F_m^* - \mathbb{E} \left[ F_m({\boldsymbol{\theta}}_m^i (t)) \right] \right) + 2 \rho (t) \eta(t) \left( F^* - \mathbb{E} \left[ F(\boldsymbol{\theta} (t)) \right] \right).
\end{align}
Plugging \eqref{AppLemmaTemr_2_Eq_9} into \eqref{AppLemmaTemr_2_Eq_1_3} completes the proof of Lemma \ref{AppFDPLemmaTerm_2}.

\section{Proof of Lemma \ref{LemmaTermE}}\label{AppProofLemmaTermE}
We have 
\begin{align}\label{EQ_LemmaTermE_app_1}
& \frac{2 \rho (t) \eta(t)}{M} \sum\nolimits_{m=1}^{M} \sum\nolimits_{i=2}^{\tau} \mathbb{E} \left[ \langle \boldsymbol{\theta}^* - \boldsymbol{\theta}_m^i (t) , \nabla F_m \left( \boldsymbol{\theta}_m^i (t), \xi_m^i (t) \right) \rangle \right] \nonumber\\
& \thirdequal \frac{2 \rho (t) \eta(t)}{M} \sum\nolimits_{m=1}^{M} \sum\nolimits_{i=2}^{\tau} \mathbb{E} \left[ \langle \boldsymbol{\theta}^* - \boldsymbol{\theta}_m^i (t) , \nabla F_m \left( \boldsymbol{\theta}_m^i (t) \right) \rangle \right]\nonumber\\
& \thirdinequal \frac{2 \rho (t) \eta(t)}{M} \sum\nolimits_{m=1}^{M} \sum\nolimits_{i=2}^{\tau} \mathbb{E} \left[ F_m (\boldsymbol{\theta}^*) - F_m(\boldsymbol{\theta}_m^i (t)) - \frac{\mu}{2} \left\| \boldsymbol{\theta}_m^i (t) - {\boldsymbol{\theta}}^* \right\|_2^2 \right]\nonumber\\
& = \frac{2 \rho (t) \eta(t)}{M} \sum\nolimits_{m=1}^{M} \sum\nolimits_{i=2}^{\tau} \mathbb{E} \left[ F_m (\boldsymbol{\theta}^*) - F_m^* + F_m^* - F_m(\boldsymbol{\theta}_m^i (t)) - \frac{\mu}{2} \left\| \boldsymbol{\theta}_m^i (t) - {\boldsymbol{\theta}}^* \right\|_2^2 \right]\nonumber\\
& = 2 \rho (t) \eta(t) (\tau - 1) \Big( F^* - \frac{1}{M} \sum\nolimits_{m=1}^{M} F_m^* \Big) +  \frac{2 \rho(t) \eta (t)}{M} \sum\nolimits_{m=1}^{M} \sum\nolimits_{i=2}^{\tau} \left( F_m^* - \mathbb{E} \left[ F_m({\boldsymbol{\theta}}_m^i (t)) \right] \right) \nonumber\\
& \quad - \frac{\mu \rho (t) \eta(t)}{M} \sum\nolimits_{m=1}^{M} \sum\nolimits_{i=2}^{\tau} \mathbb{E} \left[ \left\| \boldsymbol{\theta}_m^i (t) - {\boldsymbol{\theta}}^* \right\|_2^2 \right],  
\end{align}
where (a) follows since $\mathbb{E}_{\xi} \left[ \nabla F_m \left( \boldsymbol{\theta} (t), \xi_m^i (t) \right) \right] = \nabla F_m \left( \boldsymbol{\theta} (t)  \right)$, $\forall i, m, t$, and (b) follows due to the fact that $F_m$ is $\mu$-strongly convex. We have
\begin{align}\label{EQ_LemmaTermE_app_2}
& - \left\| \boldsymbol{\theta}_m^i (t) - {\boldsymbol{\theta}}^* \right\|_2^2 = - \left\| \boldsymbol{\theta}_m^i (t) - {\boldsymbol{\theta}} (t) \right\|_2^2 - \left\| \boldsymbol{\theta} (t) - {\boldsymbol{\theta}}^* \right\|_2^2 - 2 \langle \boldsymbol{\theta}_m^i (t) - {\boldsymbol{\theta}} (t) , \boldsymbol{\theta} (t) - {\boldsymbol{\theta}}^* \rangle \nonumber\\
& \qquad \quad \firstinequal - \left\| \boldsymbol{\theta}_m^i (t) - {\boldsymbol{\theta}} (t) \right\|_2^2 - \left\| \boldsymbol{\theta} (t) - {\boldsymbol{\theta}}^* \right\|_2^2 + \frac{1}{\eta(t)} \left\| \boldsymbol{\theta}_m^i (t) - {\boldsymbol{\theta}} (t) \right\|_2^2 + \eta(t) \left\| \boldsymbol{\theta} (t) - {\boldsymbol{\theta}}^* \right\|_2^2 \nonumber\\
&\qquad \quad = - (1 - \eta(t)) \left\| \boldsymbol{\theta} (t) - {\boldsymbol{\theta}}^* \right\|_2^2 + \Big(\frac{1}{\eta(t)} - 1\Big) \left\| \boldsymbol{\theta}_m^i (t) - {\boldsymbol{\theta}} (t) \right\|_2^2,
\end{align}
where (a) follows from Cauchy-Schwarz inequality. Noting that $\Gamma =F^* - \frac{1}{M} \sum\nolimits_{m=1}^{M} F_m^*$, the proof of Lemma \ref{LemmaTermE} is completed by substituting the result in \eqref{EQ_LemmaTermE_app_2} into \eqref{EQ_LemmaTermE_app_1}.

\bibliographystyle{IEEEtran}
\bibliography{Report}

\begin{thebibliography}{10}
\providecommand{\url}[1]{#1}
\csname url@samestyle\endcsname
\providecommand{\newblock}{\relax}
\providecommand{\bibinfo}[2]{#2}
\providecommand{\BIBentrySTDinterwordspacing}{\spaceskip=0pt\relax}
\providecommand{\BIBentryALTinterwordstretchfactor}{4}
\providecommand{\BIBentryALTinterwordspacing}{\spaceskip=\fontdimen2\font plus
\BIBentryALTinterwordstretchfactor\fontdimen3\font minus
  \fontdimen4\font\relax}
\providecommand{\BIBforeignlanguage}[2]{{%
\expandafter\ifx\csname l@#1\endcsname\relax
\typeout{** WARNING: IEEEtran.bst: No hyphenation pattern has been}%
\typeout{** loaded for the language `#1'. Using the pattern for}%
\typeout{** the default language instead.}%
\else
\language=\csname l@#1\endcsname
\fi
#2}}
\providecommand{\BIBdecl}{\relax}
\BIBdecl

\bibitem{DCKonecnyFederated}
J.~Konecny, H.~B. McMahan, F.~X. Yu, P.~Richtarik, A.~T. Suresh, and D.~Bacon,
  ``Federated learning: {Strategies} for improving communication efficiency,''
  \emph{{arXiv:1610.05492v2 [cs.LG]}}, Oct. 2017.

\bibitem{McMahan2017CommunicationEfficientLO}
H.~B. McMahan, E.~Moore, D.~Ramage, S.~Hampson, and B.~A. y~Arcas,
  ``Communication-efficient learning of deep networks from decentralized
  data,'' in \emph{Proc. AISTATS}, 2017.

\bibitem{GoogleMcMahanFed}
B.~{McMahan} and D.~Ramage, ``Federated learning: {Collaborative} machine
  learning without centralized training data,'' \emph{{[online]. Available.
  https://ai.googleblog.com/2017/04/federated-learning-collaborative.html}},
  Apr. 2017.

\bibitem{KonecnyRandDistMean}
J.~Konecny and P.~Richtarik, ``Randomized distributed mean estimation:
  {Accuracy} vs communication,'' \emph{arXiv:1611.07555 [cs.DC]}, Nov. 2016.

\bibitem{SmithFedMultiTask}
V.~Smith, C.-K. Chiang, M.~Sanjabi, and A.~S. Talwalkar, ``Federated multi-task
  learning,'' in \emph{Proc. Neural Information Processing Systems (NIPS)},
  Long Beach, CA, USA, 2017.

\bibitem{KonecnyFLBeyondData}
J.~Konecny, B.~McMahan, and D.~Ramage, ``Federated optimization: {Distributed}
  optimization beyond the datacenter,'' \emph{arXiv:1511.03575 [cs.LG]}, Nov.
  2015.

\bibitem{NishioFLClientSelecHetRes}
T.~Nishio and R.~Yonetani, ``Client selection for federated learning with
  heterogeneous resources in mobile edge,'' \emph{arXiv:1804.08333 [cs.NI]},
  Oct. 2018.

\bibitem{AccelDNNFLEdge}
J.~Ren, G.~Yu, and G.~Ding, ``Accelerating {DNN} training in wireless federated
  edge learning system,'' \emph{arXiv:1905.09712 [cs.LG]}, May 2019.

\bibitem{MohammadDenizDSGDCS}
{M. M. Amiri} and D.~G\"und\"uz, ``Machine learning at the wireless edge:
  {Distributed} stochastic gradient descent over-the-air,'' \emph{{IEEE} Trans.
  Signal Process.}, to appear.

\bibitem{KaibinParallelWork}
G.~Zhu, Y.~Wang, and K.~Huang, ``Broadband analog aggregation for low-latency
  federated edge learning,'' \emph{{IEEE} Trans. Wireless Commun.}, vol.~19,
  no.~1, pp. 491--506, Jan. 2020.

\bibitem{MohamamdDenizFLOverAirSPAWC19}
M.~M. Amiri and D.~G\"und\"uz, ``Over-the-air machine learning at the wireless
  edge,'' in \emph{Proc. IEEE International Workshop on Signal Processing
  Advances in Wireless Communications (SPAWC)}, Cannes, France, Jul. 2019, pp.
  1--5.

\bibitem{FLTWCMohammadDenizFading}
------, ``Federated learning over wireless fading channels,'' \emph{IEEE Trans.
  Wireless Commun.}, to appear.

\bibitem{YangFedLearOverAirComp}
K.~Yang, T.~Jiang, Y.~Shi, and Z.~Ding, ``Federated learning via over-the-air
  computation,'' \emph{{arXiv:1812.11750 [cs.LG]}}, Jan. 2019.

\bibitem{MohammadTolgaDenizFLGlobalSIP}
{M. M. Amiri}, T.~M. Duman, and D.~G\"und\"uz, ``Collaborative machine learning
  at the wireless edge with blind transmitters,'' in \emph{Proc. IEEE Global
  Conference on Signal and Information Processing GlobalSIP}, Ottawa, ON,
  Canada, Nov. 2019.

\bibitem{ChenVinceJointLearCommFL}
M.~Chen, Z.~Yang, W.~Saad, C.~Yin, H.~V. Poor, and S.~Cui, ``A joint learning
  and communications framework for federated learning over wireless networks,''
  \emph{arXiv:1909.07972 [cs.NI]}, Sep. 2019.

\bibitem{YangArafaVinceAgeBasedFL}
H.~H. Yang, A.~Arafa, T.~Q.~S. Quek, and H.~V. Poor, ``Age-based scheduling
  policy for federated learning in mobile edge networks,''
  \emph{arXiv:1910.14648 [cs.IT]}, Oct. 2019.

\bibitem{ShiZhouFastConverFL}
W.~Shi, S.~Zhou, and Z.~Niu, ``Device scheduling with fast convergence for
  wireless federated learning,'' \emph{arXiv:1911.00856 [cs.NI]}, Nov. 2019.

\bibitem{DinhFLWireNetConver}
{C. Dinh, et al.}, ``Federated learning over wireless networks: {Convergence}
  analysis and resource allocation,'' \emph{arXiv:1910.13067 [cs.LG]}, Nov.
  2019.

\bibitem{HowardVinceSchedulingsFL}
H.~H. Yang, Z.~Liu, T.~Q.~S. Quek, and H.~V. Poor, ``Scheduling policies for
  federated learning in wireless networks,'' \emph{arXiv:1908.06287 [cs.IT]},
  Oct. 2019.

\bibitem{RaviCommEffFLGaussian}
W.-T. Chang and R.~Tandon, ``Communication efficient federated learning over
  multiple access channels,'' \emph{arXiv:2001.08737 [cs.IT]}, Jan. 2020.

\bibitem{DenizOneBitAgg}
G.~Zhu, Y.~Du, D.~G\"und\"uz, and K.~Huang, ``One-bit over-the-air aggregation
  for communication-efficient federated edge learning: {Design} and convergence
  analysis,'' \emph{arXiv:2001.05713 [cs.IT]}, Jan. 2020.

\bibitem{MohammadDenizSanjVinceISIT20}
M.~M. Amiri, D.~G\"und\"uz, S.~R. Kulkarni, and H.~V. Poor, ``Update aware
  device scheduling for federated learning at the wireless edge,'' in
  \emph{Proc. {IEEE} Int'l Symp. on Inform. Theory (ISIT)}, Los Angeles, CA,
  USA, Jun. 2020.

\bibitem{XLiFedAveFLnonIID}
X.~Li, K.~Huang, W.~Yang, S.~Wang, and Z.~Zhang, ``On the convergence of
  {FedAvg} on {non-IID} data,'' \emph{arXiv:1907.02189 [stat.ML]}, Feb. 2020.

\bibitem{DCSattlerSparseBinary}
F.~Sattler, S.~Wiedemann, K.-R. M\"uller, and W.~Samek, ``Sparse binary
  compression: {Towards} distributed deep learning with minimal
  communication,'' \emph{{arXiv:1805.08768v1 [cs.LG]}}, May 2018.

\bibitem{StichLocalSGD}
S.~U. Stich, ``Local {SGD} converges fast and communicates little,''
  \emph{arXiv:1805.09767 [math.OC]}, May 2019.

\bibitem{LeCunMNIST}
Y.~{LeCun}, C.~Cortes, and C.~Burges, ``The {MNIST} database of handwritten
  digits,'' \emph{{http://yann.lecun.com/exdb/mnist/}}, 1998.

\bibitem{ADAMDC}
D.~P. Kingma and J.~Ba, ``Adam: {A} method for stochastic optimization,''
  \emph{{arXiv:1412.6980v9 [cs.LG]}}, Jan. 2017.

\bibitem{AdaGradRef}
J.~Duchi, E.~Hazan, and Y.~Singer, ``Adaptive subgradient methods for online
  learning and stochastic optimization,'' \emph{J. Mach. Learn. Res.}, vol.~12,
  pp. 2121--2159, Feb. 2011.

\end{thebibliography}

\end{document}